\def\ps@IEEEtitlepagestyle{%
  \def\@oddhead{\mycopyrightnotice}%
  \def\@oddfoot{\hbox{}\@IEEEheaderstyle\leftmark\hfil\thepage}\relax
  \def\@evenhead{\@IEEEheaderstyle\thepage\hfil\leftmark\hbox{}}\relax
  \def\@evenfoot{}%
}
\def\mycopyrightnotice{%
  \begin{minipage}{\textwidth}
  \scriptsize
  Copyright~\copyright~2021 IEEE. Personal use of this material is permitted. Permission from IEEE must be obtained for all other uses, in any current or future media, including\\reprinting/republishing this material for advertising or promotional purposes, creating new collective works, for resale or redistribution to servers or lists, or reuse of any copyrighted component of this work in other works by sending a request to pubs-permissions@ieee.org. Accepted and published in IEEE Transactions on Green Communications and Networking. Citation information: DOI 10.1109/TGCN.2020.3039282.
  \end{minipage}
}
\newcommand*{\algrule}[1][\algorithmicindent]{%
  \makebox[#1][l]{%
    \hspace*{.2em}
    \vrule height .75\baselineskip depth .25\baselineskip
  }
}
\def\ALG@printindent{%
    \ifnum \theALG@nested>0
    \ifx\ALG@text\ALG@x@notext
    \else
    \unskip
    \ALG@printindent@tempcnta=1
    \loop
    \algrule[\csname ALG@ind@\the\ALG@printindent@tempcnta\endcsname]%
    \advance \ALG@printindent@tempcnta 1
    \ifnum \ALG@printindent@tempcnta<\numexpr\theALG@nested+1\relax
    \repeat
    \fi
    \fi
}
\patchcmd{\ALG@doentity}{\noindent\hskip\ALG@tlm}{\ALG@printindent}{}{\errmessage{failed to patch}}
\patchcmd{\ALG@doentity}{\item[]\nointerlineskip}{}{}{} 
\DeclarePairedDelimiter{\ceil}{\lceil}{\rceil}
\DeclareMathOperator{\tr}{tr}
\DeclareMathOperator{\diag}{diag}
\begin{document}
%
\title{An Optimal Low-Complexity Energy-Efficient ADC Bit Allocation for Massive MIMO}

\author{\IEEEauthorblockN{I. Zakir Ahmed$^\star$, Hamid Sadjadpour$^\star$, and Shahram Yousefi$^\ast$\\}
\IEEEauthorblockA{$^\star$ Department of Electrical and Computer Engineering, UC Santa Cruz.\\ $^\ast$Department of Electrical and Computer Engineering, Queen's University, Canada}}


%

\maketitle
\begin{abstract}
Fixed low-resolution \textcolor{black}{Analog to Digital Converters (ADC)} help reduce the power consumption in \textcolor{black}{millimeter-wave Massive Multiple-Input Multiple-Output} (Ma-MIMO) receivers operating at large bandwidths. However, they do not guarantee optimal Energy Efficiency (EE). \textcolor{black}{It has been shown that adopting variable-resolution (VR) ADCs in Ma-MIMO receivers can improve performance with Mean Squared Error (MSE) and throughput while providing better EE. In this paper,} we present an optimal energy-efficient bit allocation (BA) algorithm for Ma-MIMO receivers equipped with VR ADCs under a power constraint. We derive an expression for EE as a function of the Cramer-Rao Lower Bound on the MSE of the received, combined, and quantized signal. An optimal BA condition is derived by maximizing EE under a power constraint. We show that the optimal BA thus obtained is exactly the same as that obtained using the brute-force BA with a significant \textcolor{black}{reduction in computational complexity.} We also study the EE performance and computational complexity of a heuristic algorithm that yields a near-optimal solution.
\end{abstract}


%

%
\IEEEpeerreviewmaketitle

\newcommand{\Xmatrix}{
\begin{bmatrix}
\ddots  & 0     & 0 \\
0  & \frac{1}{{\sigma_i^2}} & 0 \\
0 & 0 & \ddots
\end{bmatrix}}
\newcommand{\Ymatrix}{
\begin{bmatrix}
\ddots  & 0  & 0 \\
0  & \frac{f(b_i)l_i}{\big(1-f(b_i)\big) \sigma_i^2} & 0 \\
0 & 0 & \ddots
\end{bmatrix}}
\newcommand{\Zmatrix}{
\begin{bmatrix}
\ddots  & 0  & 0 \\
0  & \frac{\sigma_i^2}{\sigma_n^2 + \frac{f(b_i)l_i}{\big(1-f(b_i)\big)}} & 0 \\
0 & 0 & \ddots
\end{bmatrix}}
\newcommand{\ZMmatrix}{
\begin{bmatrix}
\ddots  & 0  & 0 \\
0  & \frac{\sigma_i^2}{\sigma_n^2 + \frac{f(b_i)l_i}{\big(1-f(b_i)\big)}} + \frac{1}{p} & 0 \\
0 & 0 & \ddots
\end{bmatrix}}
\newcommand{\Fmatrix}{
\begin{bmatrix}
\ddots  & 0  & 0 \\
0  & 10 & 0 \\
0 & 0 & \ddots 
\end{bmatrix}}
\newcommand{\InvFmatrix}{
\begin{bmatrix}
\ddots  & 0 & 0 \\
0  & f(b_i)\big(1-f(b_i)\big)l_i & 0 \\
0 & 0 & \ddots
\end{bmatrix}}

\section{Introduction}\label{Intro}
\textcolor{black}{Today's telecommunication networks contribute to $2\%$ of the total carbon dioxide emissions \cite{Ganti2, Ganti}. The radio access network contributes about $92\%$ of the total power consumption \cite{SpecArt, BTSPow}. Studies show that 5G base stations require about three times the power of 4G base stations \cite{SpecArt}. One of the 5G standards' goals is to improve the overall network energy efficiency (EE). The 5G standards have set a goal of 100x improvement in network EE compared to the existing 4G-LTE networks \cite{5GReq}. Massive Multiple-Input Multiple-Output (Ma-MIMO) technology is considered both at sub-6Ghz and millimeter wave (mmWave) frequencies. In both scenarios, a large number of antennas help to increase the capacity of the system. Millimeter-wave Ma-MIMO is considered for the back-haul wireless interconnects between the Base Stations (BS), to achieve high throughput and spectral efficiency. However, this comes at the cost of increased power consumption, resulting in poor EE \cite{5GBackHaul,5GBackHaul2}.\\
\indent As envisioned by the 5G standards, network densification ramifications are a complex heterogeneous network (HetNet) consisting of many small- and medium-sized cells, and macrocells. The Single-User (SU) Ma-MIMO framework forms the backbone of communication links between the back-haul HetNet elements \cite{Heter5G}. By splitting the precoding and combining between analog and digital domains (hybrid precoding and combining), the number of RF paths can be reduced considerably as compared to the number of transmit and receive antennas \cite{SigProc,mmPreCom}. Despite adopting hybrid combing at the receiver, the system's overall energy efficiency is poor because the analog to digital converters (ADC) operating at such large bandwidths and high bit-resolution consume a large amount of power \cite{5GBackHaul,SigProc,Rangan}.} In addition to power consumption, high-resolution ADCs operating at high sampling frequencies produce huge amounts of data that are difficult to handle. Using fixed low-resolution ADCs is a popular approach adopted in Ma-MIMO receiver architectures to mitigate large power demands \cite{Jmo}. However, an optimal EE performance is necessary to meet the stringent demands set out by the 5G standards \cite{SpecArt,5GReq}. Adopting variable-resolution (VR) ADCs in Ma-MIMO settings yields such benefits \cite{VarBitAllocJour,Zakir1,Zakir2,Zakir3}.

\subsection{Previous Works} \label{pwork}
Low-resolution ADC MIMO receiver architectures using 1-bit and fixed $n$-bit frameworks have been studied extensively over the last few years \cite{Jmo,Mezghani,Muris,HybArchCap,Uplink}. \textcolor{Black}{Overall, the 1-bit ADC receiver architecture in MIMO receivers has been shown to improve EE; however, at the cost of performance at medium to high SNR regimes for a broad set of system parameters like the number of transmit or receive antennas, the order of modulation used, and channel distribution.} For example, it has been shown that despite improved deployment cost, there is considerable rate loss in the medium to high SNR regimes with 1-bit ADC architectures \cite{Jmo}. It has also been shown that by a small increase in the resolution of ADCs (eg., with 3 bits) on all RF paths, significant performance gains can be achieved for a broad range of system parameters \cite{SvenCap}. \textcolor{black}{Also, there is performance degradation due to channel estimation using low-resolution ADCs \cite{ChanEst}. A practical channel estimation approach under the impact of ADC quantization is considered in \cite{DongR2,DongR3} (in addition to the data transmission stage). The uplink performance evaluation of a multiuser Ma-MIMO system with spatially correlated channels using low-resolution ADCs at the base station is presented in \cite{DongR1}.}\\
\indent All the papers above use fixed-bit-resolution ADCs on the receiver's RF paths. Since the resolutions of ADCs are fixed and low, an optimal EE performance is not guaranteed for a given channel. \textcolor{Black}{From the simulations in \cite{VarBitAllocJour,Zakir1,Zakir2,Zakir3}, it can be seen that by varying the ADC resolutions on each RF path for a given channel \textcolor{black}{condition} and receiver power budget, optimal performance is obtained.} Thus, employing VR ADCs on the receiver's RF paths can be advantageous. \textcolor{black}{The VR ADCs employed should have the ability to change bit resolutions across coherence time. Here, the novel VR ADC architectures and mixed-ADC-bank hardware structures proposed in previous works can be considered \cite{VRadc1,VRadc3}. An ADC Bit Allocation (BA) mechanism that decides on the bit resolution to be used on a given RF path and coherence duration is consequential in achieving optimal EE. Another advantage of employing VR ADCs along with an optimal BA scheme is that a high-resolution ADC can be brought into the signal path during the pilot signal acquisition, thereby removing the ill effects of low-resolution ADCs on channel estimation. Also, the absence of doppler due to the communication between fixed network elements in a wireless backhaul ensures longer coherence durations even at mmWave frequencies \cite{CoherTime,rapaport}. This relaxes the requirement for faster switching of ADC bit resolutions between coherence frames and makes the adoption of VR ADCs in the mmWave wireless backhaul more amicable  \cite{VRadc3}.}\textcolor{black}{ On the other hand, the hardware cost of the novel VR ADC architectures may be higher. However, the energy saving and the long term positive environmental benefits of achieving optimal EE underscores the initial higher cost disadvantage.}\\ 
\indent A BA mechanism based on minimizing the Mean Square Quantization Error (MSQE) under the receiver power constraint is presented in \cite{VarBitAllocJour}. A BA mechanism based on the mean squared error (MSE) minimization under a power constraint using a Genetic Algorithm was proposed in \cite{Zakir1}. An optimal BA based on MSE minimization for a SU mmWave Ma-MIMO channel under a power constraint was derived in \cite{Zakir2}. \textcolor{Black}{A similar Algorithm based on channel capacity maximization was derived in \cite{Zakir3}. In a more recent paper by Kaushik et al., a joint BA and hybrid beamforming strategy is proposed \cite{kaushik2019energy}. In this work, the BA is jointly designed for both digital to analog converts (DAC) and ADCs, along with hybrid precoder and combiner, thus effectively improving the overall EE. It is also shown that the DAC/ADC BA is dynamic during operation and achieves higher EE when compared with existing benchmark techniques that use fixed DAC and ADC bit resolutions \cite{kaushik2019energy}. The authors in \cite{kaushik2019energy} propose a novel alternating direction method of multipliers to optimize hybrid precoder, combiner, and BA matrices jointly for both ADC/DAC, thus achieving lower computational complexity. In the proposed work, we focus mainly on the optimal ADC BA for EE, and hence the computational complexity of our proposed algorithm may not be as good as that of \cite{kaushik2019energy}.}
\subsection{Our Contribution}\label{cwork}
\textcolor{black}{The contributions of this paper are as follows:
\begin {itemize}
\item We propose an ADC BA scheme whose solution is precisely the same as that obtained using the brute-force or exhaustive search (ES) BA with an order of magnitude reduction in multiplication complexity. This provides for optimal EE performance under a power constraint for a SU Ma-MIMO wireless back-haul framework.
\item For the first time, we derive an analytical expression for EE as a function of the Cramer-Rao Lower Bound (CRLB) on MSE of the received, quantized, and combined signal. Using this expression, we derive the proposed ADC BA algorithm.
\item We also propose a heuristic algorithm using simulated annealing (SA) that is near-optimal. The parameters of the SA algorithm can be tuned to trade off the EE optimality and computational complexity.
\end {itemize}}
 
\textit{Notation:}
The column vectors are represented as boldface small letters and matrices as boldface uppercase letters. The primary diagonal of a matrix is denoted as ${\rm diag}(\cdot)$, and all expectations $E[\cdot]$ are over the random variable $\bold{n}$, which is an AWGN vector, i.e., $E[\cdot] = E_{\bold{n}}[\cdot]$. The multivariate normal distribution with mean $\boldsymbol{\mu}$ and covariance $\boldsymbol{\varphi}$ is denoted as $\mathcal{N}(\boldsymbol{\mu},\boldsymbol{\varphi})$ and $\mathcal{CN}(\bold{0},{\boldsymbol{\varphi}})$ denotes a multivariate complex-valued circularly-symmetric Gaussian distribution. The trace of a matrix $\bold{A}$ is shown as $\tr{(\bold{A})}$ and the $N \times N$ identity matrix as $\bold{I}_N$. The term $h(\bold{x})$ defines the differential entropy of a continuous random variable $\bold{x}$. The superscripts $T$ and $H$ denote transpose and Hermitian transpose, respectively. The terms $\mathbb{I}$, $\mathbb{R}$, and $\mathbb{C}$ indicate the set of integer, real, and complex numbers, respectively. 

The rest of this paper is organized as follows. Section \ref{sigmod} describes the system model and parameters. In Section \ref{ba}, we derive the optimal BA conditions for EE. The Section \ref{Sec_Algo} describes the two proposed Algorithms based on the optimal condition derived in Section \ref{ba}. In Section \ref{Sim}, we present the simulation results, and in Section \ref{speed}, we study and compare the computational complexities, followed by the conclusions in Section \ref{conc}. Theorems and their proofs are presented in the Appendices.

\section{Signal Model}\label{sigmod}
The signal model for a typical SU Ma-MIMO transceiver with hybrid precoding and combining is shown in Figure \ref{fig:Fig1new.pdf}. \textcolor{Black}{This signal model forms an underlying framework for wireless backhaul communication link between basestations in a HetNet \cite{5GBackHaul,5GBackHaul2}.} In Figure \ref{fig:Fig1new.pdf}, ${\bold{F}_D}$ and ${\bold{F}_A}$ denote the digital and analog precoders, respectively. Similarly, ${\bold{W}_D^H}$ and ${\bold{W}_A^H}$ represent the digital and analog combiners, respectively. The vector $\bold{x}$ is an $N_s\times1$ transmitted signal vector with unit average power. Let $N_{rt}$ and $N_{rs}$ denote the number of RF chains at the transmitter and  receiver, respectively. Also, $N_t$ and $N_r$ represent the number of transmit and receive antennas, respectively. The channel matrix $\bold{H} = \big[ h_{ij} \big]$ is an \begin{math}(N_r\times N_t)\end{math} matrix representing the line of sight mmWave Ma-MIMO channel with properties defined in \cite{rapaport} (chapter 3, pages 99-125).

\begin{figure}[h]
\centering
\resizebox{.4\textwidth}{!}{
\input{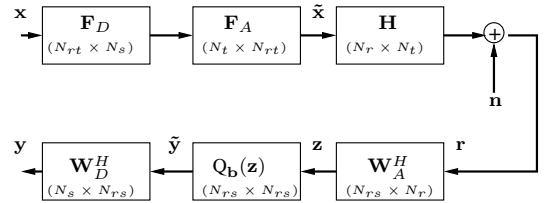}
}
\caption{Signal Model.}
\label{fig:Fig1new.pdf}
\end{figure}

The transmitted signal $\bold{\tilde{x}}$ and the received signal $\bold{r}$ are thus known as ${\bold{\tilde{x}}} = {\bold{F}_A}{\bold{F}_D}{\bold{x}}\text{ and  }{\bold{r}} = {\bold{H}}{\bold{\tilde{x}}}+{\bold{n}}$.
Here, ${\bold{n}}$ is an $N_r\times1$ noise vector of independent and identically distributed (i.i.d.) complex Gaussian random variables such that ${\bold{n}} \sim \mathcal{CN}(\bold{0},{\sigma_n^2}{\bold{I}_{N_r}})$. The received symbol vector $\bold{r}$ is analog-combined with ${\bold{W}_A^H}$ to get ${\bold{z}} = {\bold{W}_A^H}{\bold{r}}$  and later  digitized using a variable-bit quantizer to produce ${\bold{\tilde{y}}} = \text{Q}_{\bold{b}} \big( {\bold{z}} \big) = \bold{W}_{\alpha}\big( {\bold{b}} \big){\bold{z}}+{\bold{n}_q}$ \cite{VarBitAllocJour}. This  signal is combined using the digital combiner ${\bold{W}_D^H}$ to produce the output signal ${\bold{y}} = {\bold{W}_D^H}{\bold{\tilde{y}}}$. The quantizer is modeled as an Additive Quantization Noise Model (AQNM) \cite{Rangan,Rangan2}. 
Here $\bold{b}=[b_1 b_2 b_3 .... b_N]^T$ is a vector whose entries $b_i$ indicate the number of bits (on both I and Q channels) that are allocated to the ADC on RF path $i$. \textcolor{black}{The bits $b_i \in \mathbb{I}$ take values between 1 and $N_b$.} The vector $\bold{n}_q$ has a distribution of $ \mathcal{CN}(\bold{0},{\bold{D}_q^2})$ and is uncorrelated with $\bold{z}$ \cite{Rangan,Rangan2}.

Hence, the relationship between the transmitted signal vector $\bold{x}$ and the received symbol vector $\bold{y}$ at the receiver is given by
\begin{equation}\label{eq5a}
\begin{split}
{\bold{y}} &= {\bold{W}_D^H}{\bold{W}_{\alpha}}{\big( {\bold{b}} \big)}{\bold{W}_A^H}{\bold{H}}{\bold{F}_A}{\bold{F}_D}{\bold{x}} + {\bold{W}_D^H}{\bold{W}_{\alpha}\big( {\bold{b}} \big)}{\bold{W}_A^H}{\bold{n}}\\
&+{\bold{W}_D^H}{\bold{n}_q},
\end{split}
\end{equation}
where the  dimensions of matrices are 
${\bold{F}_D} \in \mathbb{C}^{N_{rt} \times N_s}$, ${\bold{F}_A} \in \mathbb{C}^{N_t \times N_{rt}}$, ${\bold{H}} \in \mathbb{C}^{N_r \times N_t}$, ${\bold{W}_A^H} \in \mathbb{C}^{N_{rs} \times N_r}$, ${\bold{W}_D^H} \in \mathbb{C}^{N_s \times N_{rs}}$, and $\bold{W}_{\alpha}\big( {\bold{b}} \big) \in \mathbb{R}^{N_{rs} \times N_{rs}}$.

With the diagonal BA matrix $\bold{W}_{\alpha}\big( {\bold{b}} \big)$, we intend to design the precoders  ${\bold{F}_D}$, ${\bold{F}_A}$, and Combiners ${\bold{W}_D^H}$, ${\bold{W}_A^H}$, along with the ADC BA ${\bold{W}_{\alpha}\big( {\bold{b}} \big)}$ for a given channel realization $\bold{H}$. We assume perfect CSI at the transmitter. We further assume that 
$N_{rs} = N_s$ and the extension to the case $N_{rs} \ne N_s$ is straightforward.

\section{Energy-Efficient Bit-Allocation Design}\label{ba}
We first present an expression for the CRLB on the MSE that can be achieved on the received, combined, and quantized signal $\bold{y}$ in \eqref{eq5a}. We then derive the expression for the information rate as a function of the CRLB. The CRLB is a function of the hybrid precoder, hybrid combiner, and the BA matrix. We derive the expression for EE using the information rate. An optimal BA condition is arrived by maximizing the EE under a power constraint.
\subsection{CRLB on MSE as a function of BA}\label{ba_mse}
Having designed the precoders such that ${\bold{F}_{\text{opt}}} \approx {\bold{F}_A}{\bold{F_D}}$ with the constraints described in \cite{Zakir2}, we can rewrite ($\ref{eq5a}$) as
\begin{equation}\label{eq9a}
\begin{aligned}
{\bold{y}} &= {\bold{W}_D^H}{\bold{W}_{\alpha}}{\big( {\bold{b}} \big)}{\bold{W}_A^H}{\bold{U}}{\bold{\Sigma}}{\bold{x}} + {\bold{W}_D^H}{\bold{W}_{\alpha}\big( {\bold{b}} \big)}{\bold{W}_A^H}{\bold{n}}+{\bold{W}_D^H}{\bold{n}_q},
\end{aligned}
\end{equation}
with the SVD of the channel matrix as $\bold{H} = \bold{U}\bold{\Sigma}\bold{F}_{\text{opt}}^H$.
Using ($\ref{eq9a}$), we derive the expression for $\mbox{MSE}$ $\delta$ as
\begin{equation}\label{10aa}
\begin{split}
\delta &\triangleq \tr{(E\big[ (\bold{y}-\bold{x})^2\big])}\\
\mbox{MSE}(\bold{x}) &= E\big[ (\bold{y}-\bold{x})^2\big]\\
&= p({\bold{K}}-{\bold{I}_{N_s}})^2 + {\sigma_{n}^2}{\bold{G}\bold{G}^H} + {\bold{W}_D^H}{\bold{D}_q^2}{\bold{W}_D},
\end{split}
\end{equation}
where
${\bold{K}} = {\bold{W}_D^H}{\bold{W}_{\alpha}}{\bold{W}_A^H}{\bold{U}}{\bold{\Sigma}}$, $E[{\bold{x}}{\bold{x}}^H] = p{\bold{I}_{N_s}}$, ${\bold{G}} = {\bold{W}_D^H}{\bold{W}_{\alpha}}{\bold{W}_A^H}$, $E[{\bold{n}}{\bold{n}}^H] = {\sigma_n^2}{\bold{I}_{N_r}}$, $E[{\bold{n}_q}\bold{n}_q^H] = {\bold{D}_q^2}$. Note that $p$ is the average power of symbol $\bold{x}$, ${\bold{D}_q^2} = {\bold{W}_{\alpha}}{\bold{W}_{1-\alpha}}{\text{diag}}[ {\bold{W}_A^H}{\bold{H}}({\bold{W}_A^H}{\bold{H}})^H+{\bold{I}_{N_{rs}}}]$, and $E[{\bold{n}}{\bold{n}_q^H}] = 0$. For simplicity of notation, we refer to $\bold{W}_{\alpha}\big( {\bold{b}} \big)$ as $\bold{W}_{\alpha}$. The expression for the $\mbox{MSE}(\bold{x})$ in ($\ref{10aa}$) can be shown as \cite{Zakir2}
\begin{equation}\label{10ac}
\mbox{MSE}(\bold{x}) = {\sigma_n^2}{\bold{\Sigma}^{-2}} + {\bold{W}_D^H}{\bold{D}_q^2}{\bold{W}_D}.
\end{equation}
The CRLB for \eqref{10ac} is derived as \cite{Zakir2}
\begin{equation}\label{eq26a}
{\bold{I}^{-1}({\bold{\hat{x}}})} = {\sigma_n^2}{\bold{\Sigma}^{-2}}+{\bold{K}^{-1}}{\bold{W}_D^H}{\bold{D}_q^2}{\bold{W}_D}({\bold{K}^H})^{-1}.
\end{equation}
An optimal BA condition based on the CRLB minimization is derived in \cite{Zakir2} by minimizing \eqref{eq26a} with respect to the BA matrix $\bold{W}_{\alpha}$ under a power constraint $P_{\text{ADC}}$.
\begin{equation}\label{bitcond_mse}
\bold{b}^* = \underbrace{\text{argmin}}_{\substack{\bold{b} \in \mathbb{I}^{N_s \times 1}; \\ {P_{\text{TOT}}}\leq{P_{\text{ADC}}}}}\Bigg\{{{\bold{\Sigma}^{-2}}\bigg[ {\sigma_n^2}{\bold{I}_{N_s}} +  {\bold{W}_{\alpha}^{-2}}{\bold{D}_q^2}\bigg]}\Bigg\}.
\end{equation}
 $P_{\text{TOT}}$ is the total power consumed by the ADCs with bit allocation $\bold{b}$ and is shown to equal $2\sum_{i=1}^{N} c{f_s}2^{b_i}$, where $c$ is the power consumed per conversion step and $f_s$ is the sampling rate in Hz \cite{Uplink}. 
\subsection{Energy efficiency as a function of bit allocation}\label{ba_cap}
In this section, we first derive the expression for the information rate of the SU mmWave Ma-MIMO channel encompassing the channel matrix $\bold{H}$, the hybrid precoders ${\bold{F}_D}$, ${\bold{F}_A}$, and the hybrid combiners ${\bold{W}_D^H}$, ${\bold{W}_A^H}$ along with the BA matrix $\bold{W}_{\alpha}$. We then use the information rate to arrive at an expression for EE. Equation \eqref{eq5a} can be simplified as
\begin{equation}\label{eq9a_cap}
\begin{aligned}
{\bold{y}} = {\bold{K}}{\bold{x}} + {\bold{n}_1},
\end{aligned}
\end{equation}
where $\bold{n}_1 = {\bold{W}_D^H}{\bold{W}_{\alpha}}{\bold{W}_A^H}{\bold{n}} + {\bold{W}_D^H}{\bold{n}_q}$. Here $\bold{n}$ is an additive noise vector that is multivariate Gaussian distributed as $\bold{n} \sim \mathcal{CN}(\bold{0},{\sigma_n^2}{\bold{I}_{N_r}})$. \textcolor{Black}{Inspired by \cite{Rangan,Suhas,Rangan2,Uplink}, we assume that $\bold{n}_q$ has Gaussian distribution such that $\bold{n}_q \sim \mathcal{N}(\bold{0},{\bold{D}_q^2})$. This results in $\bold{n}_1$ having the distribution $\mathcal{N}(\bold{0},\bold{\Phi})$ where $\bold{\Phi} = {\sigma_n^2}{\bold{G}}{\bold{G}^H} + {\bold{W}_D^H}{\bold{D}_q^2}{\bold{W}_D}$ \cite{Zakir2}. We assume that  $\bold{x}$ and $\bold{n}_1$ are independent, and is a valid assumption because of the following reasons. The input symbol vector can be modeled as $\bold{x} \sim \mathcal{CN}(\bold{0},p\bold{I}_{N_s})$ \cite{Rangan,VarBitAllocJour}. This can be achieved using efficient Gaussian scramblers \cite{Scramble}. It is straightforward to see that $\bold{n}_1$ and $\bold{x}$ are independent, given that $\bold{n}_1$ and $\bold{x}$ are multivariate Gaussian vectors that are uncorrelated.}
The information rate for the given Ma-MIMO channel can be written as
\begin{equation}\label{eq14a_cap}
\begin{split}
R(\bold{b}) &= I\big(\bold{x}; \bold{y}\big) = h(\bold{y}) - h(\bold{y}|\bold{x})\\
&= h(\bold{y}) - h(\bold{K}\bold{x} + \bold{n}_1|\bold{x}) \overset{(a)} = h(\bold{y}) - h(\bold{n}_1),
\end{split}
\end{equation}
where $I\big(\bold{x}; \bold{y}\big)$ is the mutual information of random variables $\bold{x}$ and $\bold{y}$, and $\bold{K}$ is a function of BA vector $\bold{b}$. (a) holds if and only if both $\bold{n}_q$ and $\bold{x}$ are Gaussian. Hence, ensures $\bold{y}$ is Gaussian.  However, under the assumption that either $\bold{n}_q$ or $\bold{x}$ being non Gaussian, finding a closed form expression of the considered information rate \eqref{eq14a_cap} is an open problem.
Now, if $\bold{y} \in \mathbb{C}^{N_s}$, then the differential entropy $h(\bold{y})$ is less than or equal to $\log_2\det(\pi e \bold{Q})$ with equality if and only if $\bold{y}$ is circularly symmetric complex Gaussian with $E[\bold{y}\bold{y}^H] = \bold{Q}$ \cite{Bengt}. As such,
\begin{equation}\label{eq16a_cap}
\bold{Q} = E \Big[ (\bold{K}\bold{x} + \bold{n}_1)(\bold{K}\bold{x} + \bold{n}_1)^H \Big] = p\bold{K}\bold{K}^H + \bold{\Phi}.
\end{equation}
Note that $\bold{\Phi} = {{\sigma_n^2}{\bold{G}}{\bold{G}^H} + {\bold{W}_D^H}{\bold{D}_q^2}{\bold{W}_D}}$.\\

Thus, the differential entropies $h(\bold{y})$ and $h(\bold{n}_1)$ satisfy
\begin{equation}\label{diffent}
\begin{split}
h(\bold{y}) &\le \log_2\det(\pi e \bold{Q}) = \log_2\det \bigg( \pi e \Big(p \bold{K}\bold{K}^H + \bold{\Phi} \Big) \bigg),\\
h(\bold{n}_1) &\le \log_2\det(\pi e \bold{\Phi}).
\end{split}
\end{equation}
We show that $\bold{n}_1$ is a circularly symmetric jointly Complex Gaussian vector using Theorem \ref{Thm3} in the Appendix. Hence, we can write
\begin{equation}\label{hn_equal}
h(\bold{n}_1) = \log_2\det(\pi e \bold{\Phi}).
\end{equation}
Thus, the information rate $I(\bold{x};\bold{y})$ achieved can be written as
\begin{equation}\label{maxI}
\begin{split}
R(\bold{b}) =  h(\bold{y}) - h(\bold{n}_1) \overset{(b)} = &\log_2\det( \pi e \bold{Q}) - \log_2\det(\pi e \bold{\Phi})\\
= &\log_2\det \Big ( p\bold{K}\bold{K}^H\bold{\Phi}^{-1} + \bold{I}_{N_s} \Big),
\end{split}
\end{equation}
where (b) follows from the assumption that the input symbol vector $\bold{x}$ is circular symmetric Gaussian vector that could be modeled 
as $\bold{x} \sim \mathcal{CN}(\bold{0},p\bold{I}_{N_s})$ \cite{Rangan,VarBitAllocJour}. It is straightforward to see that \eqref{maxI} is a general case of $(17)$ in \cite{SigProc} when the BA is infinite-bits on all ADCs. We simplify ($\ref{maxI}$) to write the information rate as
\begin{equation}\label{maxI_cont}
\begin{split}
R(\bold{b}) &= \log_2\det \Big ( p\bold{K}\bold{K}^H\bold{\Phi}^{-1}\bold{K}\bold{K}^{-1} + \bold{K}\bold{K}^{-1} \Big)\\
&= \log_2 p^{N_s}\det \Big ( \bold{K}^H\bold{\Phi}^{-1}\bold{K} + \frac{1}{p}\bold{I}_{N_s} \Big)\\
&= {N_s}\log_2 p + \log_2\det \Big ( ({\bold{I}^{-1}({\bold{\hat{x}}})})^{-1} + \frac{1}{p}\bold{I}_{N_s} \Big).
\end{split}
\end{equation}
Note that ${\bold{I}^{-1}({\bold{\hat{x}}})}$ is the CRLB (15) in \cite{Zakir2} achieved by the MSE $\delta$ in \eqref{10aa}.
Now, we define EE as a function of BA as \cite{EEdef}
\begin{equation}\label{ee_eq1}
\begin{split}
\eta_{EE}(\bold{b}) &= \frac{R(\bold{b})}{p(\bold{b})}\text{ (bits/Hz/Joule)}\\
&= \frac{{N_s}\log_2 p + \log_2\det \Big ( ({\bold{I}^{-1}({\bold{\hat{x}}})})^{-1} + \frac{1}{p}\bold{I}_{N_s} \Big)}{P_T + P_R + 2 \sum_{i=1}^{N} c{f_s}2^{b_i}}, 
\end{split}
\end{equation}
where $p(\bold{b})$ is the total power consumed. Here $P_T$, $P_R$ are the power consumed at the transmitter and receiver respectively. The net ADC power consumption is $\big(\ 2\sum_{i=1}^{N} c{f_s}2^{b_i}\big)\ $. The expression for $p(\bold{b})$ can be effectively written as
\begin{equation}\label{ee_eq2}
\begin{split}
p(\bold{b}) = 2cf_s \times (\frac{P_T + P_R}{2cf_s} + \sum_{i=1}^{N} 2^{b_i}).
\end{split}
\end{equation}
The transmitter power can be modeled as $P_T = \frac{P_{\text{out}}}{\eta_{PA}}+P_{\text{CIR}}$ \cite{Ganti,Ganti2}. The terms $P_{\text{out}}$, $\eta_{PA}$, and $P_{\text{CIR}}$ represent the transmit power, efficiency of the power amplifier, and basestation circuit power respectively. The receiver power is modeled as $P_R = N_rN_sP_{\text{PS}} + N_rP_{\text{LNA}} + N_sP_{\text{VCO}}$. The terms $P_{\text{PS}}$, $P_{\text{LNA}}$, and $P_{\text{VCO}}$ correspond to the power consumed by a single device phase shifter, Low Noise Amplifier and local oscillator respectively \cite{SigProc}.\\
\indent It is to be noted that the power consumption attributed towards the BA algorithm itself is highly hardware and implementation dependent. To this effect, we consider the computational analysis of the proposed algorithm in terms of number of multiplications and additions, which is discussed in Section \ref{speed}.
\subsection{Hybrid combiner structure}\label{comb_dsgn}
Phase shifters or splitters impose constraints on the design of the analog combiner $\bold{W}_A^H$ \cite{SigProc}. We express the constrained analog combiner as $\bold{\tilde{W}}_A^H$. The digital combiner compensates the imperfections in the analog combiner, that is ${\bold{W}_A^H} = {\bold{W}_D}{\bold{\tilde{W}}_A^H}$ (20) in \cite{Zakir2}. We design the constrained analog combiner ${\bold{\tilde{W}}_A^H}$ and the digital combiner ${\bold{W}_D}$, such that ${\bold{W}_A^H} = {\bold{U}^H} = {\bold{W}_D}{\bold{\tilde{W}}_A^H}$. This is obtained by solving the optimization problem using method described in \cite{PreDsgn}. 
\begin{equation}\label{comb_desgn}
\begin{aligned}
({\bold{\tilde{W}}_A^{opt}},{\bold{W}_D^{opt}}) = & \underbrace{\text{argmin}}_{{\bold{\tilde{W}}_A},{\bold{W}_D}}{\lVert {\bold{U} - {{\bold{\tilde{W}_A}{\bold{W}_D^H}}} \rVert }}_F,\\
\text{ such that } & {\bold{\tilde{W}_A}}\in{\mathcal{W}_{RF}}, {\lVert {{\bold{W}_D^H}{\bold{\tilde{W}}_A}} \rVert }_F^2 = N_s
\end{aligned}
\end{equation}
$\mathcal{W}_{RF}$ is the set of all possible analog combiners architecture based on phase shifters. This includes all possible $N_r \times N_s$ matrices with constant magnitude entries.
\subsection{Maximizing the EE}\label{cap_max}
Let $\bold{b}^*$ be the optimal BA that maximizes the EE in \eqref{ee_eq1}, where
\begin{equation}\label{maxcap}
\begin{split}
&\eta_{EE}(\bold{b}) =\\
&\underbrace{\text{max}}_{\substack{\bold{b}^*,{P_{\text{TOT}}}\leq{P_{\text{ADC}}}}} \Bigg\{ \frac{{N_s}\log_2p + \log_2  \det \Big ( ({\bold{I}^{-1}({\bold{\hat{x}}})})^{-1} + \frac{1}{p}\bold{I}_{N_s} \Big)}{p(\bold{b})} \Bigg\}.
\end{split}
\end{equation}
Thus $\bold{b}^*$ is derived as
\begin{equation}\label{bitcond}
\bold{b}^* = \underbrace{\text{argmax}}_{\substack{\bold{b} \in \mathbb{I}^{N_s \times 1}, \\ {P_{\text{TOT}}}\leq{P_{\text{ADC}}}}}\Bigg\{ \frac{1}{p(\bold{b})} \log_2  \det \Big ( ({\bold{I}^{-1}({\bold{\hat{x}}})})^{-1} + \frac{1}{p}\bold{I}_{N_s} \Big) \Bigg\}.
\end{equation}
By substituting $\bold{K}$ into ($\ref{eq26a}$) and by designing the structure of the hybrid combiner as described earlier, we can simplify the expression for CRLB as
\begin{equation}\label{crlb_cont}
\begin{split}
{\bold{I}^{-1}({\bold{\hat{x}}})} &= {\sigma_n^2}{\bold{\Sigma}^{-2}}+\\
&{\bold{\Sigma}}^{-1}{\bold{U}^H}({\bold{W}_A^H})^{-1}{\bold{W}_{\alpha}^{-1}}{\bold{D}_q^2}{\bold{W}_{\alpha}^{-1}}{\bold{W}_A^{-1}}{\bold{U}}{\bold{\Sigma}}^{-1}\\
&= {\sigma_n^2}{\bold{\Sigma}^{-2}}+{\bold{\Sigma}}^{-2}{\bold{W}_{\alpha}^{-2}}{\bold{D}_q^2}.
\end{split}
\end{equation}
We now compute the Inverse of CRLB $\Big({\bold{I}^{-1}({\bold{\hat{x}}})}\Big)^{-1}$ as 
\begin{equation}\label{inv_crlb}
\begin{split}
&\Big({\bold{I}^{-1}({\bold{\hat{x}}})}\Big)^{-1} = \Big({\sigma_n^2}{\bold{\Sigma}^{-2}}+{\bold{\Sigma}}^{-2}{\bold{W}_{\alpha}^{-2}}{\bold{D}_q^2}\Big)^{-1}\\
&\text{              }= {\rm diag}\bigg( \frac{\sigma_1^2}{\sigma_n^2 + g(b_1)l_1}, \cdots, \frac{\sigma_{N_s}^2}{\sigma_n^2 + g(b_{N_s})l_{N_s}}\bigg),
\end{split}
\end{equation}
Substituting $\Big({\bold{I}^{-1}({\bold{\hat{x}}})}\Big)^{-1}$ in ($\ref{bitcond}$), we have
\begin{equation}\label{bitcond_cont}
\begin{split}
\bold{b}^* = \underbrace{\text{argmax}}_{\substack{\bold{b} \in \mathbb{I}^{N_s \times 1}, \\ {P_{\text{TOT}}}\leq{P_{\text{ADC}}}}} \frac{1}{p(\bold{b})} \sum_{i=1}^{N_s} \bigg\{ \log_2  \Big( q(b_i) + 1 \Big) \bigg\},
\end{split}
\end{equation}
where $q(b_i) = \frac{p\sigma_i^2}{\sigma_n^2 + g(b_i)l_i}$.
The term $\log_2  \Big( q(b_i) + 1 \Big)$ can be expanded for two scenarios given below.\\
\textit{Case 1:}
For the case of $0 \leq q(b_i) < 1$,  we have $\log_2  \Big( q(b_i) + 1 \Big) \simeq \frac{q(b_i)}{\ln2}$. \textcolor{Black}{For proof refer to Lemma $\ref{lemm1}$ in the Appendix.}
Thus, the maximization in $\eqref{bitcond_cont}$ can be written as
\begin{equation}\label{bitcond_cont2}
\bold{b}^* = \underbrace{\text{argmax}}_{\substack{\bold{b} \in \mathbb{I}^{N_s \times 1}, \\ {P_{\text{TOT}}}\leq{P_{\text{ADC}}}}} \frac{1}{p(\bold{b})} \sum_{i=1}^{N_s} \frac{p\sigma_i^2}{\sigma_n^2 + g(b_i)l_i}.
\end{equation}\\
\textit{Case 2:} 
For the case $1 \leq q(b_i) < \infty$, we show that  $\log_2  \Big( q(b_i) + 1 \Big) = \Bigg(1-\frac{1}{q(b_i)}\Bigg)P + L(p,\sigma_i^2, \sigma_n^2)$. For proof refer to Lemma $\ref{lemm2}$ in the Appendix. $P$ and $L(p,\sigma_i^2, \sigma_n^2)$ are independent of $b_i$. Hence, the maximization in $\eqref{bitcond_cont}$ can be simplified to
\begin{equation}\label{bitcond_cont1}
\bold{b}^* = \underbrace{\text{argmax}}_{\substack{\bold{b} \in \mathbb{I}^{N_s \times 1}, \\ {P_{\text{TOT}}}\leq{P_{\text{ADC}}}}} \frac{1}{p(\bold{b})} \sum_{i=1}^{N_s} \Bigg(1-\frac{1}{q(b_i)}\Bigg)\\ 
\end{equation}
Combining the two scenarios, the $\bold{b}^*$ that guarantees optimal EE performance under a power constraint $p(\bold{b}^*) \le  P_{\text{ADC}}$ can be written as
\begin{equation}\label{optee_sol}
\begin{split}
\bold{b}^* = \underbrace{\text{argmax}}_{\substack{\bold{b} \in \mathbb{I}^{N_s \times 1}, \\ {P_{\text{TOT}}}\leq{P_{\text{ADC}}}}} \frac{1}{p(\bold{b})} \Bigg\{ \sum_{b_i \in \mathcal{X}} q(b_i) + \sum_{b_i \in \mathcal{Y}} \bigg(1 - \frac{1}{q(b_i)} \bigg) \Bigg\},
\end{split}
\end{equation}
where $\mathcal{X} = \Big\{ b_i \mid q(b_i) < 1\Big\}$, $\mathcal{Y} = \Big\{ b_i \mid q(b_i)  \ge 1\Big\}$, and $| \mathcal{X} | + | \mathcal{Y} | = N_s$.
\section{Bit Allocation Algorithm}\label{Sec_Algo}
We propose two algorithms to solve the optimal EE condition derived in \eqref{optee_sol}: \textit{(i)} An algorithm that ensures optimal BA \textit{(ii)} A simulated annealing based heuristic technique yielding  near-optimal solution. We described the algorithms below.

\subsection{Algorithm for optimal solution (Q-search)}
The term $q(b_i) = \frac{p\sigma_i^2}{\sigma_n^2 + g(b_i)l_i}$ is evaluated and stored. Here, ${\sigma_i}$ is the diagonal element of ${\bold{\Sigma}}$, ${\sigma_n^2}$ is the noise power, $g(b_i)=\frac{f(b_i)}{1-f(b_i)}$ where $f(b_i)$ is depends on the quantization error on the $i^{th}$ RF path \cite{VarBitAllocJour}. The values for $f(b_i)$ are indicated in \cite{Uplink} and $l_i$ is the $i^{th}$ element of ${\rm diag}(\bold{I}_{N_s}+\bold{W}_D^H\bold{\Sigma}^2\bold{W}_D)$.
For a given $N_s$ and $N_b$, we form a set $B_{\text{set}}$ of all possible $\bold{b}_j$'s that satisfy the ADC power budget $P_{\text{ADC}}$.
\begin{equation}\label{eq35a}
\begin{split}
B_{\text{set}} \triangleq \Big\{ &\bold{b}_j = {\big[ b_{j1}, b_{j2}, \dots, b_{jN_s}  \big]}^T \text{ for } 0 \leq j < N_b^{N_s} \mid \\
&1 \le b_{ji} \le N_b \text{ and } \sum_{i=1}^{N_s} cf_s2^{b_{ji}} \leq P_{\text{ADC}} \Big\}.
\end{split}
\end{equation}
We call this the Q-search method as described in Algorithm~$\ref{AlgoQS}$.
\begin{algorithm}[t]
  \caption{Q-search Algorithm}\label{AlgoQS}
  \begin{algorithmic}[1]
   \small
    \Procedure{Q-search}{$B_{\text{set}}$,$N_s$,$\text{Q}(N_b,N_s)$,$\text{Ptot}(\text{sizeof}(B_{\text{set}}))$}
      \State $B_{\text{set}} \gets \text{Solution Space}$
      \State $N_s\gets \text{Number of RF paths}$
      \State $\text{Q}(N_b,N_s) \gets \text{Table precomputed using \eqref{optee_sol}.}$
      \State $\text{Ptot}(\text{sizeof}(B_{\text{set}})) \gets \text{Table of } -\log_2 \big(p(\bold{b}_j)\big) \forall \bold{b}_j \in B_{\text{set}}.$
      \For{\texttt{j=0;j++;}{until \texttt{j<}sizeof($B_{\text{set}}$)}}
          \State $m \gets \sum_{i=1}^{N_s} \text{Q}(\bold{b}_j(i),i)$
           \State $ \footnotemark p \gets  \text{Ptot}(\bold{b}_j)$
          \State $\footnotemark K_f(b_j) \gets \Call{ShiftLeft}{1, (\log_2(m) + p)}$
     \EndFor
     \State $index\gets \text{max}(K_f)$
     \State $\bold{b^*}\gets B_{\text{set}}\text{ at } index$
     \State \textbf{return} $\bold{b^*}$ \Comment{Optimal Bit Allocation Vector}
    \EndProcedure
  \end{algorithmic}
  \begin{algorithmic}[2]
  \small
  \Procedure{ComputeQ}{$p$,$\sigma_n^2$,$S(N_s)$,$g(N_b)$,$l(N_s)$,$N_b$,$N_s$}
  \State $p \gets \text{Received Signal Power}$
  \State $\sigma_n^2 \gets \text{Noise Power}$
  \State $S(N_s) \gets \text{Table of the square of the singular Values of }\bold{H}.$
  \State $g(N_b) \gets \text{Table of quantization Errors}$ \Comment{Refer \cite{Zakir2,Uplink}}
  \State $l(N_s) \gets \text{Table containing $\diag(\bold{I}_{N_s}+\bold{W}_D^H\bold{\Sigma}^2\bold{W}_D)$}$
  \State $N_b \gets \text{ADC bit range}$
  \State $N_s \gets \text{Number of RF paths}$
  \For{\texttt{i=1;i++;}{until \texttt{i}$\le N_s$}}
  	\For{\texttt{$b_i$=1;$b_i$++;}{until $b_i \le N_b$}}
		\State $q \gets \frac{p\sigma_i^2}{\sigma_n^2 + g(b_i)l(i)}$
		\State $\text{Q}(b_i,i) \gets q \text{ if }q < 1$
		\State $\text{Q}(b_i,i) \gets \bigg( 1 - \frac{1}{q}\bigg) \text{ if }q \ge 1$
  	\EndFor
  \EndFor
  \State \textbf{return} $\text{Q}(N_b,N_s)$
  \EndProcedure
  \end{algorithmic}
\end{algorithm}
\footnotetext[1]{$p = -\log_2(2cf_s) -\log_2\big( \frac{P_T + P_R}{2cf_s} + \sum_{i=1}^{N_s} 2^{b_i}\big).$}
\footnotetext[2]{$\log_2()$ is indexed using lookup table \cite{NumC}}

\vspace{-5mm}
\subsection{Simulated annealing}
The SA is a metaheuristic technique used to solve global optimization problems. While it does not guarantee an optimal solution, tuning its parameters such as the cooling factor can ensure near-optimal solutions \cite{SimAn2}. The SA algorithm has a reduced complexity compared to the Q-search method and is discussed in Section \ref{speed}. The details of the Algorithm~$\ref{AlgoCRLB}$  presented below can be found in \cite{SimAn2}.\\
\begin{algorithm}[!htb]
\caption{Simulated Annealing}\label{AlgoCRLB}
  \begin{algorithmic}[1]
    \small
    \Procedure{SA}{$B_{\text{set}}$,$N_s$,$\text{Q}(N_b,N_s)$,$\text{P}(\text{sizeof}(B_{\text{set}}))$,$T_0$,$r$,$m$}
      \State $N_s\gets \text{Number of spatial-multiplexed paths}$
      \State $B_{\text{set}} \gets \text{Solution Space}$
      \State $\text{Q}(N_b,N_s) \gets \text{Table precomputed using \eqref{optee_sol}.}$
      \State $\text{P}(\text{sizeof}(B_{\text{set}})) \gets \text{Precomputed total power } \forall \bold{b}_j \in B_{\text{set}}.$
      \State $T_0 \gets \text{Initial Temperature}$
      \State $r \gets \text{Cooling factor}$
      \State $m \gets \text{Number of searches at a given temperature }t$
      \State $t \gets T_0$ \text{ Initialize Temperature}
      \State{$\bold{b}_{test} \gets  \text{Select a initial solution from }B_{\text{set}}$}
      \State{$cost \gets \frac{1}{\text{P}(\bold{b}_{test})}\sum_{i=1}^{N_s} \text{Q}(\bold{b}_{test}(i),i)$}
      \State{$(c_{opt}, \bold{b^*}) \gets (cost, \bold{b}_{test})$}
      \While {$t > 1.0$}
		\For{$m\text{ times}$}
			\State{$\bold{b}_{new} \gets  SearchNeighbour(\bold{b}_{test},B_{\text{set}})$}
			\State{$c_{new} \gets  \frac{1}{\text{P}(\bold{b}_{test})}\sum_{i=1}^{N_s} \text{Q}(\bold{b}_{new}(i),i)$}
			\State{$\delta \gets  c_{new}-cost$}
			\State{$P_a \gets \frac{1}{1+ e^{-\frac{\delta}{t}}}$}
			\If {$rand() \le P_a$} \Comment{$rand() \sim \mathcal{U}(0,1)$}
				\State{$(cost, \bold{b}_{test}) \gets (c_{new}, \bold{b}_{new})$}
				\If {$c_{new} > c_{opt}$}
					\State{$(c_{opt}, \bold{b^*}) \gets (c_{new}, \bold{b}_{new})$}			
				\EndIf
			\EndIf
		\EndFor
	\State{$t \gets rT$}
	\EndWhile
	\State \textbf{return} $\bold{b^*}$ \Comment{Optimal bit allocation vector}
    \EndProcedure
    \end{algorithmic}
    \begin{algorithmic}[2]
    \small
    \Procedure{SearchNeighbour}{$\bold{b}_{test}$,$B_{\text{set}}$}
    \State $\bold{b}_{test} \gets \text{ Current solution}$
    \State $B_{\text{set}} \gets \text{ Solution space}$ 
      \State $\bold{b}_{new} \gets \text{LookupNewSolution}(randn(),\bold{b}_{test})$ 
      \State \textbf{return} $\bold{b}_{new}$ \Comment{ Return new solution}
    \EndProcedure
  \end{algorithmic}
\end{algorithm}
\vspace{-5mm}
\section{Simulations and Results} \label{Sim}
We simulate the mmWave channel using the NYUSIM channel simulator for two channel scenarios. In one, we have 2 dominant scatters, and in other we have one dominant scatter \cite{nyusim}. The parameter configurations for the simulations is given in Table $\ref{nyusimtab}$. We consider \textcolor{black}{$N_b = 4$, }$N_s = 8$, and $N_s = 12$ in our simulations. 
We run the simulations to evaluate the EE ($\eta_{EE}$) derived in \eqref{ee_eq1} (Figures \ref{fig:crlb_EE_Nr8_H64by32.eps}-\ref{fig:crlb_EE_Nr12_H64by32_S2.eps}), and the information rate $R$ derived in \eqref{maxI_cont} (Figures \ref{fig:crlb_R_Nr8_H64by32.eps}-\ref{fig:crlb_R_Nr12_H64by32_S2.eps}) at various SNRs for $N_s = 8$ and $N_s = 12$. Monte-Carlo simulations are run with 1-bit ADCs (represented using lines-(a)) and 2-Bit ADCs (line-(b)) across all RF paths. The simulations are also run using the proposed Q-Search (line-(d)), SA (lines-(e) and (f)), and ES (line-(c)) method.\\
\indent The Q-search Algorithm always yields the optimal BA. That is, the BA solution evaluated using the proposed Q-search method is exactly the same as that of the ES method. The performance of the SA Algorithm with cooling factors 0.9 and 0.5 are indicated using the lines (e) and (f), respectively. We observe that the BA solution evaluated using SA is near-optimal with significantly reduced computational complexity compared to the Q-search method. The computational complexity analysis for these methods are discussed in Section \ref{speed} and summarized in Table \ref{tab:CRLBTab1}.
\begin{figure*}[!htb]
\centering
\begin{minipage}[b]{0.45\linewidth}
\centering
\includegraphics[width=0.8\textwidth]{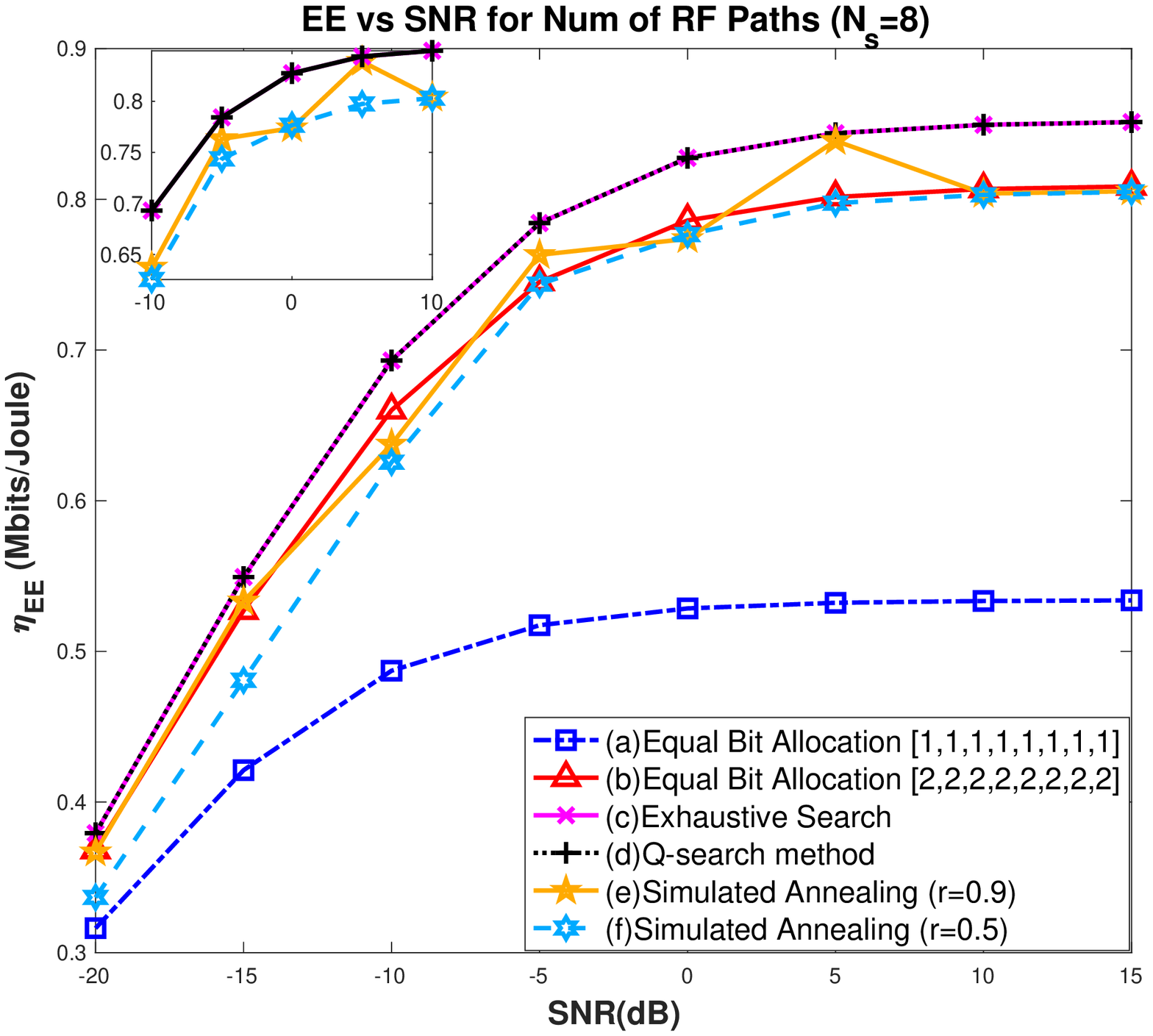}
\captionsetup{justification=centering, font=scriptsize, labelfont={color=Black}}
\caption{\textcolor{Black}{Energy efficiency vs. SNR for $N_s=8$ with 2 dominant scatterers.}}
\label{fig:crlb_EE_Nr8_H64by32.eps}
\end{minipage}
\quad 
\begin{minipage}[b]{0.45\linewidth}
\centering
\includegraphics[width=0.85\textwidth]{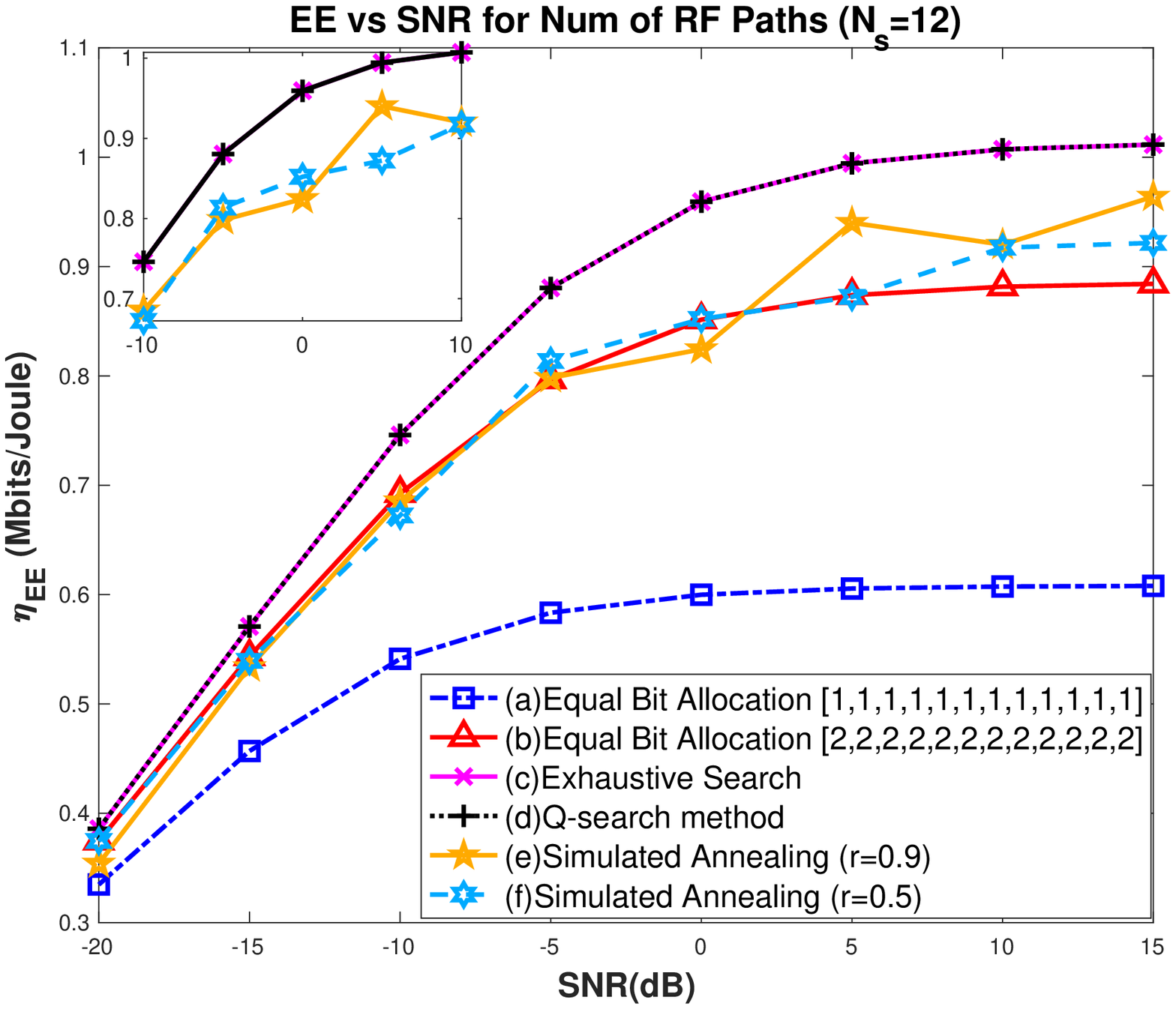}
\captionsetup{justification=centering, font=scriptsize, labelfont={color=Black}}
\caption{\textcolor{Black}{Energy efficiency vs. SNR for $N_s=12$ with 2 dominant scatterers.}}
\label{fig:crlb_EE_Nr12_H64by32.eps}  
\end{minipage}
\begin{minipage}[b]{0.45\linewidth}
\centering
\includegraphics[width=0.8\textwidth]{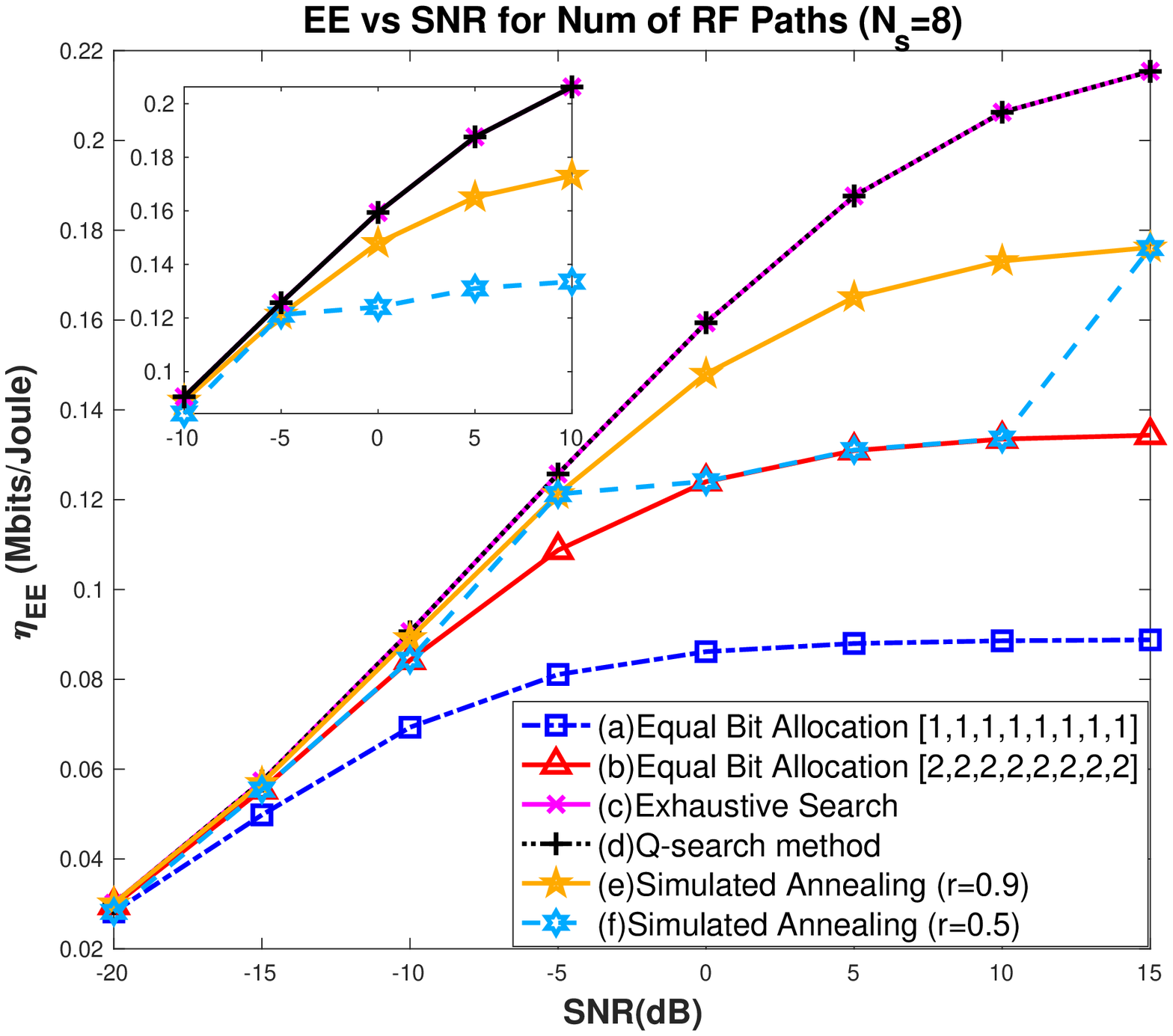}
\captionsetup{justification=centering, font=scriptsize, labelfont={color=Black}}
\caption{\textcolor{Black}{Energy efficiency vs. SNR for $N_s=8$ with a single dominant scatterer.}}
\label{fig:crlb_EE_Nr8_H64by32_S2.eps}
\end{minipage}
\quad 
\begin{minipage}[b]{0.45\linewidth}
\centering
\includegraphics[width=0.85\textwidth]{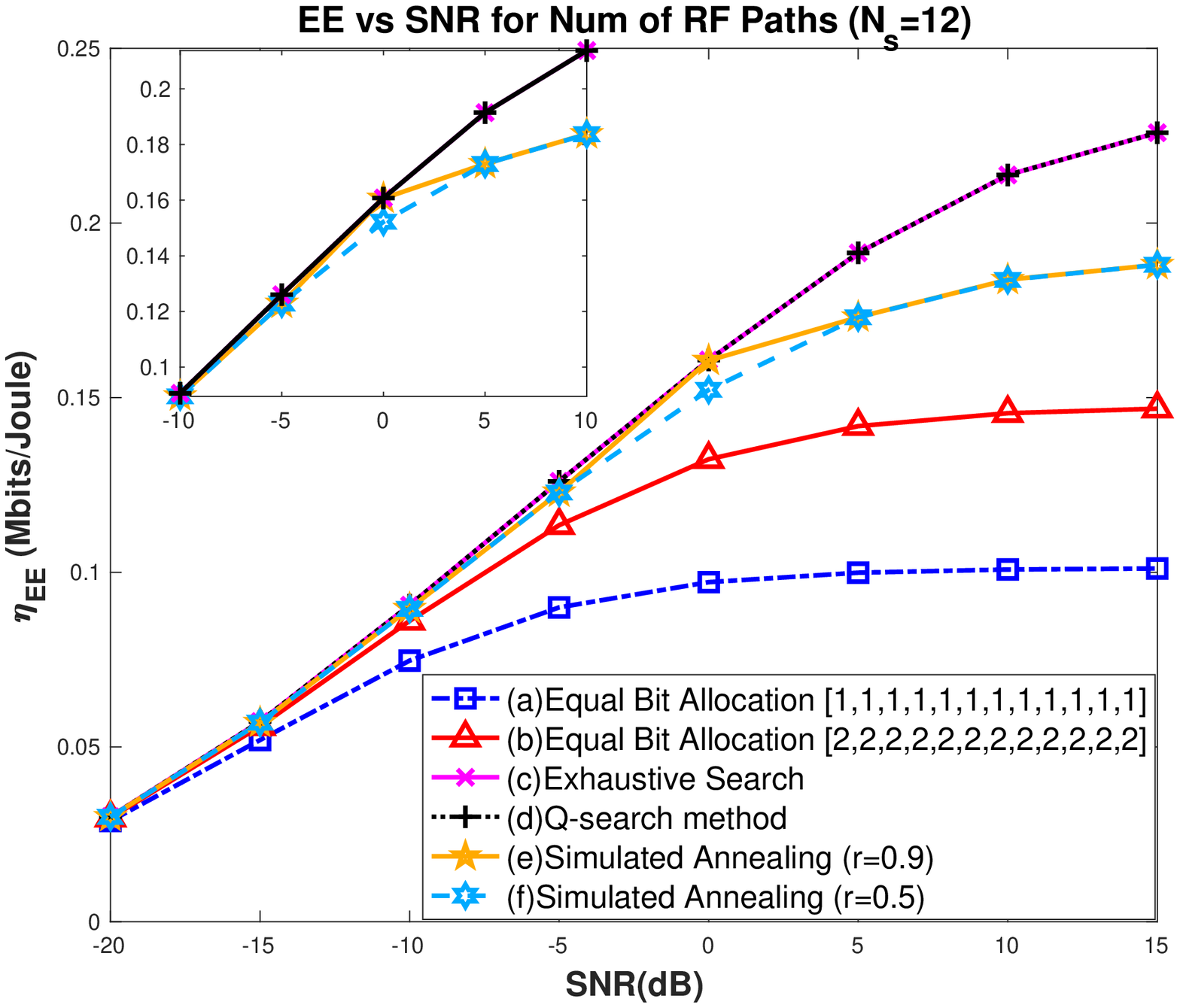}
\captionsetup{justification=centering, font=scriptsize, labelfont={color=Black}}
\caption{\textcolor{Black}{Energy efficiency vs. SNR for $N_s=12$ with a single dominant scatterer.}}
\label{fig:crlb_EE_Nr12_H64by32_S2.eps}  
\end{minipage}
\begin{minipage}[b]{0.45\linewidth}
\centering
\includegraphics[width=0.8\textwidth]{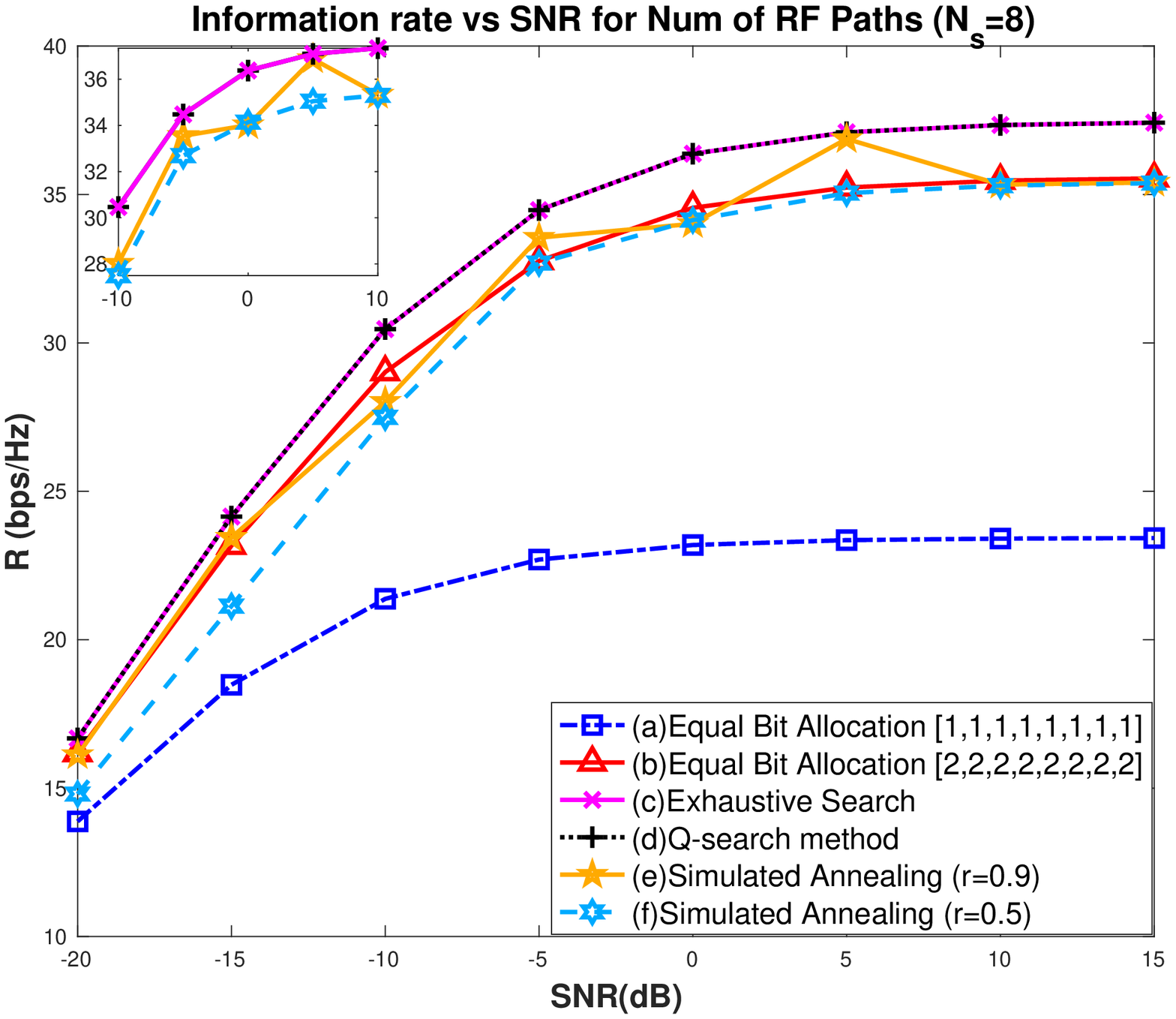}
\captionsetup{justification=centering, font=scriptsize, labelfont={color=Black}}
\caption{\textcolor{Black}{Information rate vs. SNR for $N_s=8$ with 2 dominant scatterers.}}
\label{fig:crlb_R_Nr8_H64by32.eps}
\end{minipage}
\quad 
\begin{minipage}[b]{0.45\linewidth}
\centering
\includegraphics[width=0.85\textwidth]{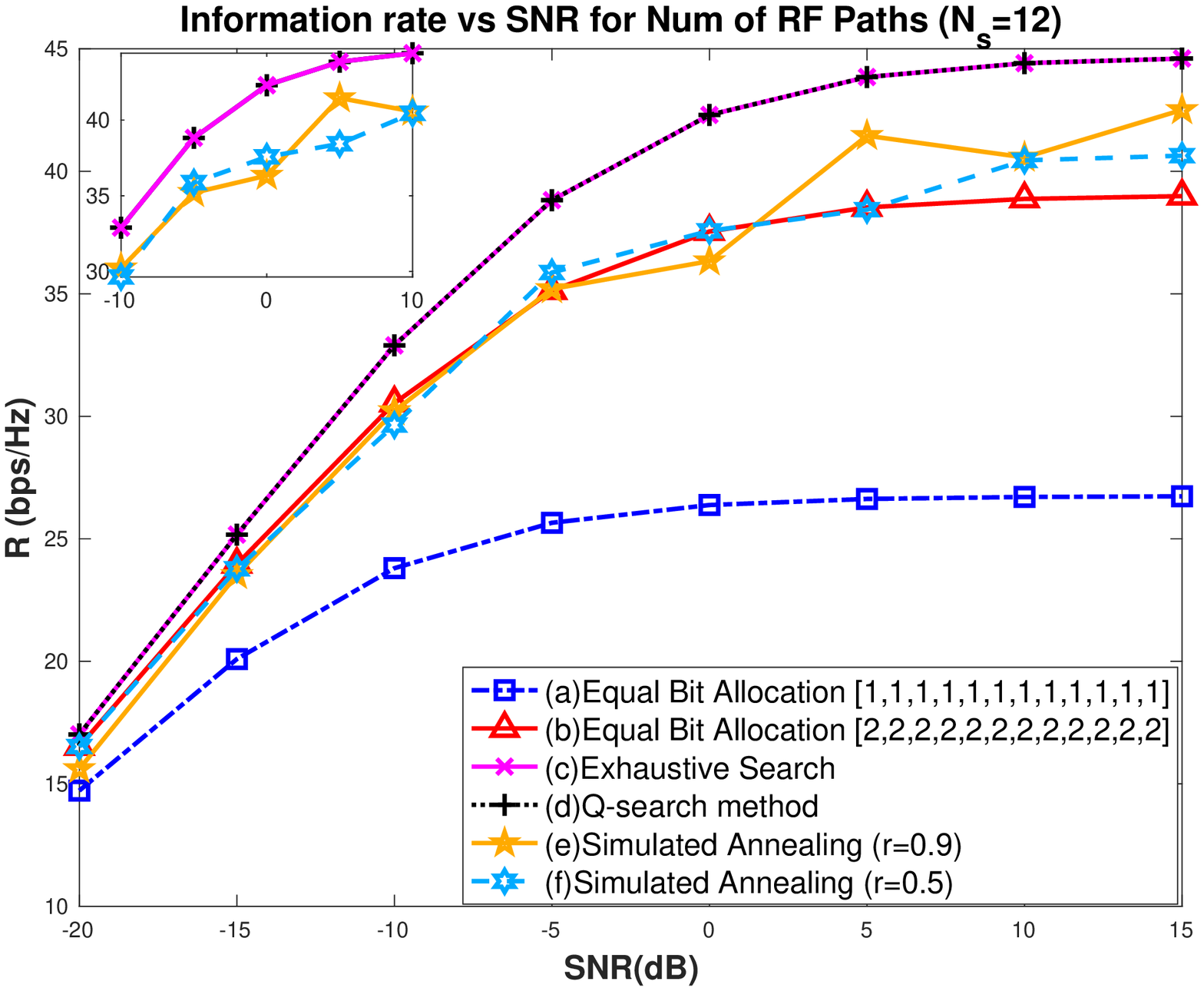}
\captionsetup{justification=centering, font=scriptsize, labelfont={color=Black}}
\caption{\textcolor{Black}{Information rate vs. SNR for $N_s=12$ with 2 dominant scatterers.}}
\label{fig:crlb_R_Nr12_H64by32.eps} 
\end{minipage}
\begin{minipage}[b]{0.45\linewidth}
\centering
\includegraphics[width=0.8\textwidth]{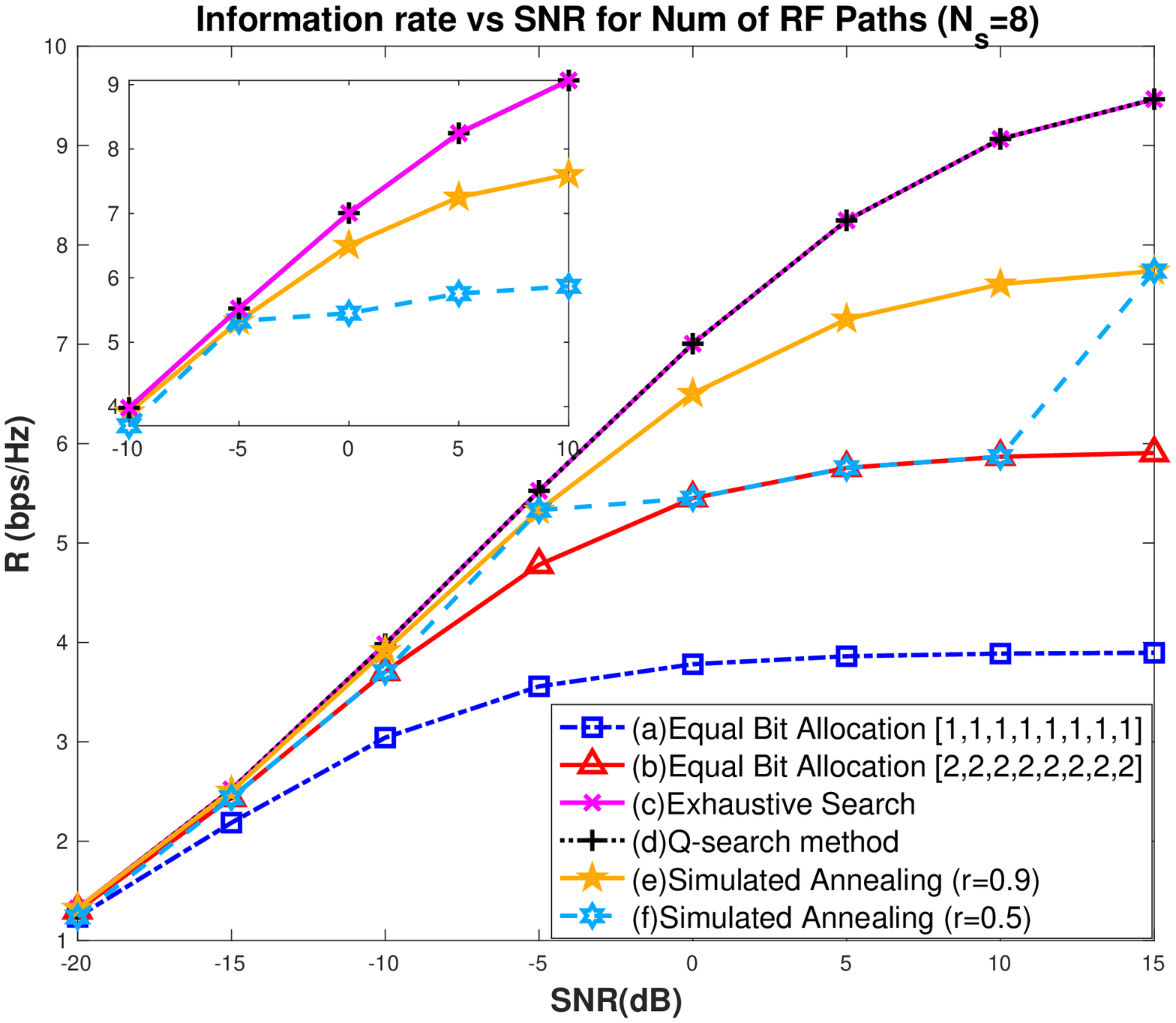}
\captionsetup{justification=centering, font=scriptsize, labelfont={color=Black}}
\caption{\textcolor{Black}{Information rate vs. SNR for $N_s=8$ with a single dominant scatterer.}}
\label{fig:crlb_R_Nr8_H64by32_S2.eps}
\end{minipage}
\quad 
\begin{minipage}[b]{0.45\linewidth}
\centering
\includegraphics[width=0.85\textwidth]{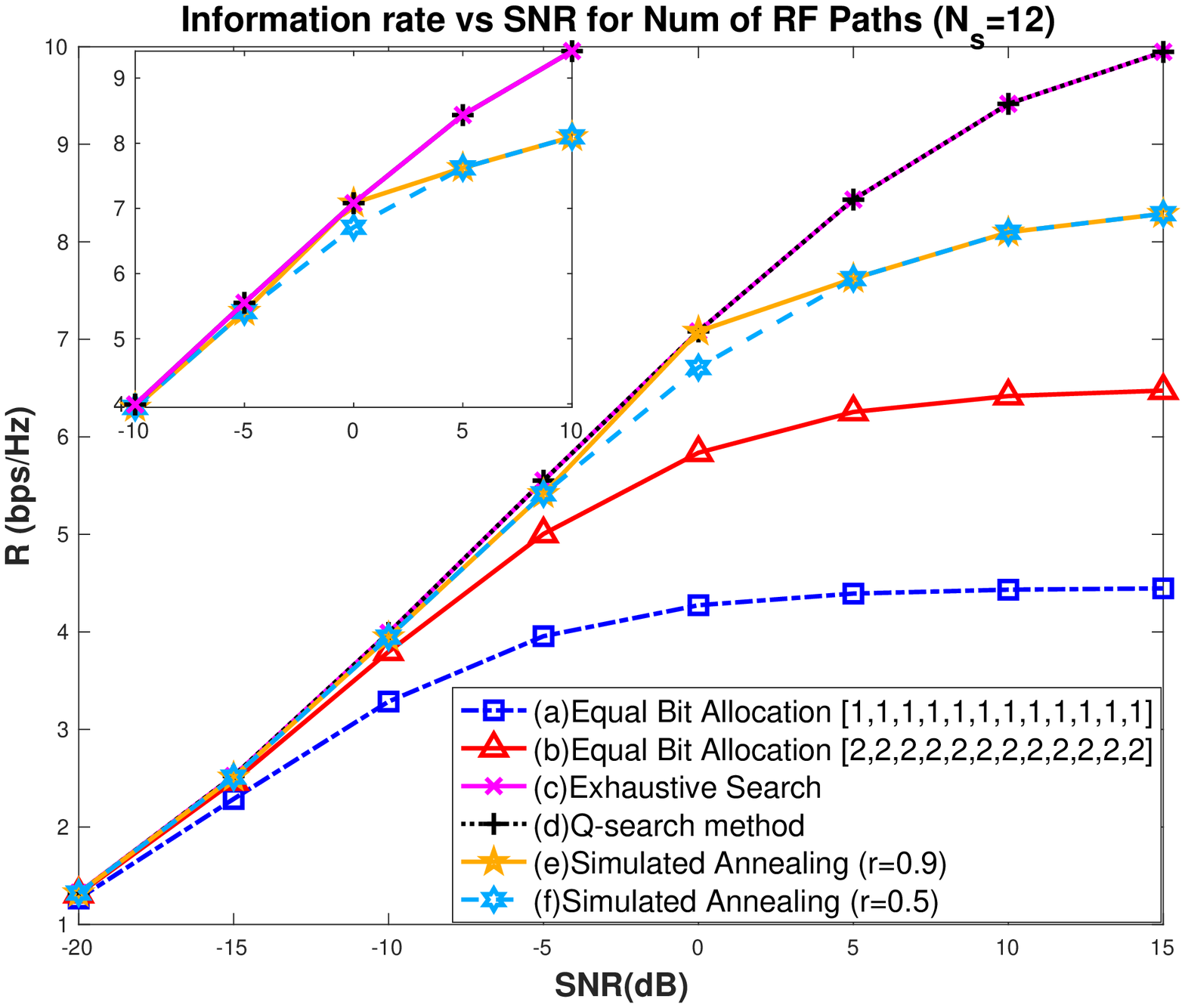}
\captionsetup{justification=centering, font=scriptsize, labelfont={color=Black}}
\caption{\textcolor{Black}{Information rate vs. SNR for $N_s=12$ with a single dominant scatterer.}}
\label{fig:crlb_R_Nr12_H64by32_S2.eps} 
\end{minipage}
\end{figure*}
\begin{table}[!htb]
\begin{center}
\resizebox{\columnwidth}{!}{%
\begin{tabu} to 0.5\textwidth {| l| l| }
 \hline
 \textbf{Parameters}  & \textbf{Value/Type} \\
 \hline
Frequency & 28Ghz \\
\hline
Environment & Line of sight \\
\hline
T-R seperation & 100m\\
\hline
TX/RX array type & ULA\\
\hline 
Num of TX/RX elements $N_t$/$N_r$ & 64/128\\
\hline
TX/RX  antenna spacing & $\lambda/2$\\
\hline
$\eta_{\text{PA}}$ & $40\%$\\
\hline
$P_{\text{CIR}}$ & 10W\\
\hline
$P_{\text{PS}}$ &  50mW\\
\hline
$P_{\text{LNA}}$ &  70mW\\
\hline
$P_{\text{VCO}}$ &  15mW\\
\hline
$c$ &  1432fJ/conversion step \cite{Murmann} \\
\hline
Sampling Frequency &  400Mhz \\
\hline
\end{tabu}}
\caption{$\text{Channel parameters for NYUSIM model \cite{nyusim}}$.} \label{nyusimtab}
\end{center}
\end{table}
\begin{table*}[t]\label{speed_tab}
\centering
\begin{center}
\begin{tabu} to 1.0\textwidth { | p{0.6cm} | X[c] | X[c] | X[c] | X[c] |}
\hline
\multirow{1}{1.5cm}{} &  \multicolumn{4}{c|}{\small Number of complex multiplications}\\
\multirow{2}{1.5cm}{ \small $N_s$} & \multicolumn{4}{c|}{}\\
\cline{2-5}
& \centering \small Exhaustive Search $O(N_b^{N_s})$ \newline \textcolor{red}{High}& \centering \small Q-search method $O(N_s^2)$ \newline \textcolor{Green}{Low} & \centering \small Sim. Annealing (r=0.9) $O(N_s^2)$ \textcolor{Green}{Low} & \small Sim. Annealing (r=0.5)  $O(N_s^2)$ \textcolor{Green}{Low} \\
\hline
\centering \multirow{2}{*}{8}  & \small 1,502,400 & \small 288$^{\S}$ & \small 288$^{\S}$ & \small 288$^{\S}$ \\
& \small 192$^{\S}$ & & &\\
\hline
\centering \multirow{2}{*}{12} & \small 223,865,040 & \small 576$^{\S}$ & \small 576$^{\S}$ &\small 576$^{\S}$ \\
& \small 432$^{\S}$ & & &\\
\hhline{=====}
\multirow{1}{1.5cm}{} &  \multicolumn{4}{c|}{\small Number of complex additions}\\
\multirow{2}{1.5cm}{ \small $N_s$} & \multicolumn{4}{c|}{}\\
\cline{2-5}
& \centering \small Exhaustive Search $O(N_b^{N_s})$ \newline \textcolor{red}{High} & \centering \small Q-search method $O(N_b^{N_s})$ \newline \textcolor{SkyBlue}{Medium} & \centering \small Sim. Annealing (r=0.9) $O(N_s^3)$ \textcolor{SkyBlue}{Medium} & \small Sim. Annealing (r=0.5) $O(N_s^2)$ \textcolor{Green}{Low} \\
\hline
\centering 8  & \small 1,218,822 & 30,272$^{\dagger}$ & \small  2,916$^{\dagger}$ & \small  396$^{\dagger}$ \\
\hline
\centering 12 & \small 193,616,609 & 3,198,552$^{\dagger}$ & \small  6,516$^{\dagger}$ & \small  780$^{\dagger}$ \\
\hline
\end{tabu}
\footnotesize{$^{\S}$ Real multiplications.$^{\dagger}$ Real additions}\\
\caption{Computational complexity in terms of total number of multiplications and additions.} \label{tab:CRLBTab1}
\end{center}
\end{table*}
\section{Computational Complexity Analysis}\label{speed}
In this section, we evaluate the computational complexity in terms of the number of multiplications and additions for the following Algorithms \textit{(i)} ES BA \textit{(ii)} proposed Q-search method \textit{(iii)} proposed SA Algorithm with two cooling factors.\\
\textbf{\textit{(i)} ES Bit-Allocation:}
It can be seen that ES BA requires $\gamma \big(N_s^2 + 2N_s\big)$ complex multiplications, $3N_s^2$ real multiplications, and $\gamma \big(N_s(N_s-1) + N_s\big)$ complex additions. Here $\gamma$ is the number of EE ($\eta_{EE}$) evaluations and is approximately the cardinality of $B_{\text{set}}$, which is $N_b^{N_s}$. Thus ES BA has a multiplicative and additive complexity of $O(N_b^{N_s})$ and thus is NP-Hard.\\
\textbf{\textit{(ii)} Q-search Method:}
The term $q(b_i)$ in \eqref{optee_sol} is precomputed for given $N_b$ and $N_s$. This consists of a table of $N_b \times N_s$ real values $\text{Q}(N_b,N_s)$. This requires the computation of $l_i = {\rm diag}[ {\bold{W}_D^H}{\bold{\Sigma}}^2{\bold{W}_D}+{\bold{I}_{N_s}}]$ that require  in $3N_s^2$ real multiplications and $2N_s^2 + N_s(N_s-1)$ real additions. To compute ${K_{f}}(\bold{b}_j)$ and  $\text{Ptot}()$ as described in Algorithm \ref{AlgoQS} for all BA's in $B_{\text{set}}$ we require 2$\mu\big(N_s+1 \big)$ real additions. Thus, a total of $3N_s^2 + 3N_sN_b$ real multiplications and $3N_s^2 + N_sN_b + \mu \big(N_s-1 \big)$ real additions are required. Here $\mu$ is the number of evaluations of ${K_{f}}(\bold{b}_j)$, which is approximately the cardinality of $B_{\text{set}}$, which is $N_b^{N_s}$.\\
\indent The table consisting of the term $-\log_2(2cf_s) - \log_2( \frac{P_T + P_R}{2cf_s} + \sum_{i=1}^{N_s} 2^{b_i})$ is precomputed and stored as $\text{Ptot}()$ for all BA's in $B_{\text{set}}$. This only requires additions and no multiplications. The term $\frac{P_T + P_R}{2cf_s}$ is independent of BA. The term $\sum_{i=1}^{N_s} 2^{b_i}$ is effectively computed as $\sum_{i=1}^{N_s}\textcolor{black}{\Call{ShiftLeft}{1,{b_i}}}$. The $\log_2()$ can be performed using shift operation and a lookup table \cite{NumC}. The ratio $\frac{R(\bold{b})}{p(\bold{b})}$ is computed without multiplication as illustrated on the line-9 of Algorithm \ref{AlgoQS}.
Thus Q-search method suffers from considerable additive complexity of $O(N_b^{N_s})$. However, it has an order of magnitude reduction in multiplicative complexity, which is $O(N_s^2)$ compared to ES BA. Besides, the Q-search method requires only real multiplications.\\
\textbf{\textit{(iii)} SA Algorithm:}
The terms $\text{Q}(N_b,N_s)$ and $\text{Ptot}()$ is precomputed and stored similar to the Q-search method. Thus resulting in $3N_s^2$ real multiplications and $2N_s^2 + N_s(N_s-1)$ real additions. However, in SA the ${K_{f}}(\bold{b}_j)$ is not evaluated for all $b_j^{'s} \in B_{\text{set}}$ as in Q-search method. The number of evaluations ($\mu$) of ${K_{f}}(\bold{b}_j)$ depends on the initial temperature $T_0$ and the cooling factor $r$. From Algorithm \ref{AlgoCRLB}, it can be seen that $\mu = \ceil*{\frac{\log{\frac{1}{T}}}{\log{r}}}$ and this results in $m\Big\{\ceil*{\frac{\log{\frac{1}{T}}}{\log{r}}}+1\Big\}(2N_s+5)$ real additions. Here $m$ is the number of search at a given temperature $t$. Hence, the additive complexity of SA can be tuned to $O(N_s^D)$ using the parameters $T$ and $r$. The complexity degree of $N_s$ is $D$ and can be derived using the relationship $T \triangleq r^{-N_s^{D-1}}$. In our simulations, we fix $T$ and set $r=0.9$ and $r=0.5$ that correspond to additive complexity of $O(N_s^3)$ and $O(N_s^2)$, respectively. The generation of random numbers is carefully designed and has $O(1)$ complexity. The computation of the acceptance probability $P_a$, which is a sigmoid function is a lookup table with $O(1)$. In conclusion, the SA Algorithm has a real-multiplication complexity of $O(N_b^2)$ and an additive complexity that depends on the initial temperature $T$ and cooling factor $r$.

\section{Conclusion}\label{conc}
For a given power budget, an EE-optimal receiver with reduced computational complexity is crucial to meet the targets set by the 5G standards in terms of the network energy efficiency and spectral efficiency. In this paper, we propose an EE-optimal BA algorithm whose solution is precisely the same as the exhaustive search, with an order of magnitude improvement in multiplicative complexity. Also, we propose a heuristic algorithm using simulated annealing, whose parameters can be tuned to trade off EE optimality with computational complexity. \textcolor{black}{Both algorithms are based on our optimal EE conditions expressed as a function of BA under a power constraint.} We analyze the computational complexities of the proposed methods against ES. The computational complexity of SA is significantly lower than the Q-search method. However, this comes at the cost of no optimality guarantees.
\appendix
\section{Appendix}\label{FirstAppendix}
\renewcommand{\thesubsection}{\Alph{subsection}}
\numberwithin{equation}{section}
\newtheorem{theorem}{Theorem}
\begin{theorem}\label{Thm3}
If $\bold{n}_1 = {\bold{W}_D^H}{\bold{W}_{\alpha}}{\bold{W}_A^H}{\bold{n}} + {\bold{W}_D^H}{\bold{n}_q}$, where $\bold{n}$ is $\bold{n} \sim \mathcal{CN}(\bold{0},{\sigma_n^2\bold{I}_{N_s}})$ and ${\bold{n}_q} \sim \mathcal{N}(\bold{0},{\bold{D}_q^2})$ with ${\bold{D}_q^2} = {\bold{W}_{\alpha}}{\bold{W}_{1-\alpha}}{\diag}[ {\bold{W}_A^H}{\bold{H}}({\bold{W}_A^H}{\bold{H}})^H+{\bold{I}_{N_s}}]$, then it can be shown that $\bold{n}_1$ is a circularly symmetric complex Gaussian (CSCG) vector. That is, $\bold{n}_1 \sim \mathcal{CN}(\bold{0},\bold{\Phi})$.
\end{theorem}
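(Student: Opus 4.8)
The plan is to treat $\bold{n}_1$ as a deterministic linear image of the two mutually independent Gaussian vectors $\bold{n}$ and $\bold{n}_q$, and then to verify the three defining properties of a circularly symmetric complex Gaussian (CSCG) law: jointly Gaussian real and imaginary parts, zero mean with Hermitian covariance $\bold{\Phi}$, and a vanishing pseudo-covariance (relation) matrix $E[\bold{n}_1\bold{n}_1^T]=\bold{0}$. First I would write $\bold{n}_1 = \bold{G}\bold{n} + \bold{W}_D^H\bold{n}_q$ with the fixed matrix $\bold{G}=\bold{W}_D^H\bold{W}_{\alpha}\bold{W}_A^H$. Since $\bold{n}$ and $\bold{n}_q$ are each Gaussian and mutually independent, the stacked vector $[\bold{n}^T,\bold{n}_q^T]^T$ is jointly Gaussian, and a deterministic linear map of a jointly Gaussian vector is Gaussian; taking expectations with $E[\bold{n}]=E[\bold{n}_q]=\bold{0}$ gives $E[\bold{n}_1]=\bold{0}$.

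Second I would compute the Hermitian covariance. Using independence so that $E[\bold{n}\bold{n}_q^H]=\bold{0}$, together with $E[\bold{n}\bold{n}^H]=\sigma_n^2\bold{I}$, $E[\bold{n}_q\bold{n}_q^H]=\bold{D}_q^2$, and $\bold{W}_{\alpha}^H=\bold{W}_{\alpha}$ (real diagonal), the two cross terms drop out and $E[\bold{n}_1\bold{n}_1^H]=\sigma_n^2\bold{G}\bold{G}^H+\bold{W}_D^H\bold{D}_q^2\bold{W}_D=\bold{\Phi}$, recovering exactly the matrix used in \eqref{hn_equal}. This is a routine bilinear expansion once independence and the stated second-order statistics are invoked.

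The main obstacle, and the heart of the argument, is the third property, circular symmetry, i.e.\ showing the pseudo-covariance $E[\bold{n}_1\bold{n}_1^T]=\bold{0}$. For the additive-noise term this is immediate: $\bold{n}\sim\mathcal{CN}(\bold{0},\sigma_n^2\bold{I})$ is CSCG, so $E[\bold{n}\bold{n}^T]=\bold{0}$. The delicate point is $\bold{n}_q$, which the statement writes as $\mathcal{N}(\bold{0},\bold{D}_q^2)$; a literally real Gaussian would give $E[\bold{n}_q\bold{n}_q^T]=\bold{D}_q^2\ne\bold{0}$ and destroy circular symmetry, so the claim hinges on establishing that $\bold{n}_q$ is itself phase-rotation invariant. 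I would justify this through the AQNM, under which the quantization errors on the in-phase and quadrature rails of each RF path are independent and identically distributed; this symmetry forces $E[\bold{n}_q\bold{n}_q^T]=\bold{0}$, so that the pseudo-covariance of $\bold{n}_1$ also vanishes. Equivalently, and perhaps more transparently, I would use the rotation characterization of circular symmetry: for every real $\theta$ we have $e^{j\theta}\bold{n}_1=\bold{G}(e^{j\theta}\bold{n})+\bold{W}_D^H(e^{j\theta}\bold{n}_q)$, and since $e^{j\theta}\bold{n}\overset{d}{=}\bold{n}$, $e^{j\theta}\bold{n}_q\overset{d}{=}\bold{n}_q$ with the two independent, it follows that $e^{j\theta}\bold{n}_1\overset{d}{=}\bold{n}_1$. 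Combining the three facts, a zero-mean complex Gaussian vector with Hermitian covariance $\bold{\Phi}$ and zero pseudo-covariance is CSCG, which yields $\bold{n}_1\sim\mathcal{CN}(\bold{0},\bold{\Phi})$ as claimed.
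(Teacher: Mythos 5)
Your proof is correct and follows the same overall skeleton as the paper's: write $\bold{n}_1=\bold{G}\bold{n}+\bold{W}_D^H\bold{n}_q$ as a linear image of independent Gaussians, check $E[\bold{n}_1]=\bold{0}$, compute $E[\bold{n}_1\bold{n}_1^H]=\sigma_n^2\bold{G}\bold{G}^H+\bold{W}_D^H\bold{D}_q^2\bold{W}_D=\bold{\Phi}$, and reduce everything to showing the pseudo-covariance $E[\bold{n}_1\bold{n}_1^T]$ vanishes, which in turn hinges entirely on the circular symmetry of $\bold{n}_q$. The one place you diverge is precisely at that hinge. The paper handles it by invoking the characterization that a CSCG vector is a fixed linear map of an i.i.d.\ $\mathcal{CN}(\bold{0},\bold{I})$ vector and then \emph{writing} $\bold{n}_q=\bold{D}_q\bold{w}$ with $\bold{w}\sim\mathcal{CN}(\bold{0},\bold{I}_{N_s})$, from which $E[\bold{n}_q\bold{n}_q^T]=\bold{D}_qE[\bold{w}\bold{w}^T]\bold{D}_q=\bold{0}$ follows mechanically; note this representation is asserted rather than derived from the stated hypothesis $\bold{n}_q\sim\mathcal{N}(\bold{0},\bold{D}_q^2)$, so it is really a modeling choice. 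You instead justify circularity from the physics of the AQNM (i.i.d.\ quantization errors on the I and Q rails) and, equivalently, from rotation invariance $e^{j\theta}\bold{n}_1\overset{d}{=}\bold{n}_1$. Both routes rest on the same unproven premise that the quantization noise is circular, but your version has the merit of flagging explicitly that a literally real $\mathcal{N}(\bold{0},\bold{D}_q^2)$ would give $E[\bold{n}_q\bold{n}_q^T]=\bold{D}_q^2\neq\bold{0}$ and break the theorem --- a caveat the paper's proof silently papers over by the substitution $\bold{n}_q=\bold{D}_q\bold{w}$. In short: same proof, with your treatment of the only delicate step being somewhat more transparent about where the assumption enters.
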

\begin{proof}
The condition for the random vector $\bold{n}_1$ to be CSCG is \cite{RobertGallager}
\begin{equation}\label{proof_eq1}
E[\bold{n}_1] = E[\bold{n}_1\bold{n}_1^T] = \bold{0}.
\end{equation}
Here, $E[\bold{n}_1\bold{n}_1^T]$ is the pseudo-covariance. We first prove that $\bold{n}_q$ is CSCG distributed as $\bold{n}_q \sim \mathcal{N}(\bold{0},{\bold{D}_q^2})$. Given ${\bold{D}_q^2} = E[\bold{n}_q{\bold{n}_q^H}] = {\bold{W}_{\alpha}}{\bold{W}_{1-\alpha}}{\diag}[ {\bold{W}_A^H}{\bold{H}}({\bold{W}_A^H}{\bold{H}})^H+{\bold{I}_{N_s}}]$; with $\bold{W}_{\alpha}$, $\bold{W}_{1-\alpha}$ and ${\diag}[ {\bold{W}_A^H}{\bold{H}}({\bold{W}_A^H}{\bold{H}})^H+{\bold{I}_{N_s}}]$ being positive real diagonal matrices, effectively results in the covariance matrix ${\bold{D}_q^2}$ being positive real diagonal.\\
A necessary and sufficient condition for a random vector $\bold{n}_q$ to be a CSCG random vector is that it has the form $\bold{n}_q = \bold{A}\bold{w}$ where $\bold{w}$ is iid complex Gaussian, that is $\bold{w} \sim \mathcal{CN}(\bold{0},\bold{I}_{N_s})$ and $\bold{A}$ is an arbitrary complex matrix \cite{Vishwa, RobertGallager}.
Since ${\bold{D}_q^2}$ is a positive real diagonal matrix, we can express
\begin{equation}\label{proof_eq2}
\bold{n}_q = \bold{D}_q\bold{w},
\end{equation}
where $\bold{w} \sim \mathcal{CN}(\bold{0},\bold{I}_{N_s})$. This leads to $E[\bold{n}_q] = \bold{D}_qE[\bold{w}] = \bold{0}$ and $E[\bold{n}_q\bold{n}_q^T] = \bold{D}_qE[\bold{w}\bold{w}^T]\bold{D}_q = \bold{0}$. Hence $\bold{n}_q$ is circularly symmetric jointly Gaussian random vector. $\bold{n}_q \sim \mathcal{CN}(\bold{0},{\bold{D}_q^2})$.\\
\noindent
Using \eqref{proof_eq2}, we can express $\bold{n}_1$ as
\begin{equation}\label{proof_eq4}
\begin{split}
\bold{n}_1 = {\bold{W}_D^H}{\bold{W}_{\alpha}}{\bold{W}_A^H}{\bold{n}} + {\bold{W}_D^H}\bold{D}_q{\bold{w}}
\end{split}
\end{equation}
Since we have $\bold{n}$ and $\bold{w}$ as i.i.d complex Gaussian vectors, we can write
\begin{equation}\label{proof_eq5}
\begin{split}
&E[\bold{n}\bold{n}^T] = E[\bold{w}\bold{n}^T] = E[\bold{n}\bold{w}^H] = E[\bold{w}\bold{n}^H] = \bold{0},\\
&E[\bold{n}\bold{n}^H] = \sigma_n^2\bold{I}_{N_s}, \text{   }E[\bold{w}\bold{w}^H] = \bold{I}_{N_s}.\\
\end{split}
\end{equation}
Thus, we arrive at
\begin{equation}\label{proof_eq6}
\begin{split}
E[\bold{n}_1] &= {\bold{W}_D^H}{\bold{W}_{\alpha}}{\bold{W}_A^H}E[{\bold{n}}] + {\bold{W}_D^H}\bold{D}_qE[{\bold{w}}] = 0.\\
E[\bold{n}_1\bold{n}_1^T] &= \bold{G}E[\bold{n}\bold{n}^T]\bold{G}^T \\
&+ \bold{G}E[\bold{n}\bold{w}^T]\bold{D}_q\bold{W}_D + \bold{W}_D^T\bold{D}_qE[\bold{w}\bold{n}^T]\bold{G}^T\\
&+ \bold{W}_D^T\bold{D}_qE[\bold{w}\bold{w}^T]\bold{D}_q\bold{W}_D=\bold{0}.
\end{split}
\end{equation}
Also, 
\begin{equation}\label{proof_eq7}
\begin{split}
E[\bold{n}_1\bold{n}_1^H] = \bold{\Phi} &= \bold{G}E[\bold{n}\bold{n}^H]\bold{G}^H+\bold{G}E[\bold{n}\bold{w}^H]\bold{D}_q\bold{W}_D \\
+ \bold{W}_D^H\bold{D}_qE[\bold{w}\bold{n}^H]\bold{G}^H &+\bold{W}_D^H\bold{D}_qE[\bold{w}\bold{w}^H]\bold{D}_q\bold{W}_D,\\
&= \sigma_n^2\bold{G}\bold{G}^H + \bold{W}_D^H\bold{D}_q^2\bold{W}_D.
\end{split}
\end{equation}
Thus, $\bold{n}_1 \sim \mathcal{CN}(\bold{0},\bold{\Phi})$ is a CSCG vector.
\end{proof}
\vspace{-8mm}
\newtheorem{lem}{Lemma}
\begin{lem}\label{lemm1}
The term  $\log_2  \Big( q(b_i) + 1 \Big)$ for $0 \leq q(b_i) < 1$, can be approximated as $\log_2  \Big( q(b_i) + 1 \Big) \simeq \frac{q(b_i)}{\ln2}$. 
\end{lem}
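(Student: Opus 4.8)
The plan is to reduce the claim to the standard Maclaurin expansion of the natural logarithm. First I would change the base of the logarithm, writing
\begin{equation}
\log_2\big(q(b_i)+1\big) = \frac{\ln\big(1+q(b_i)\big)}{\ln 2}.
\end{equation}
This isolates all the $b_i$-dependence into $\ln\big(1+q(b_i)\big)$ and leaves only the constant factor $1/\ln 2$ outside.

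Next I would invoke the Maclaurin series $\ln(1+x) = \sum_{k=1}^{\infty}(-1)^{k+1}\frac{x^k}{k} = x - \frac{x^2}{2} + \frac{x^3}{3} - \cdots$, which converges for $-1 < x \le 1$. Since the hypothesis restricts $q(b_i)$ to $0 \le q(b_i) < 1$, the argument lies strictly inside the radius of convergence, so the series applies with $x = q(b_i)$. Dividing term-by-term by $\ln 2$ gives $\log_2\big(1+q(b_i)\big) = \frac{q(b_i)}{\ln 2} - \frac{q(b_i)^2}{2\ln 2} + \cdots$, whose leading term is exactly the claimed expression.

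Finally I would retain only the first-order term and discard the remainder $-\frac{q(b_i)^2}{2\ln 2} + \cdots$, which is $O\big(q(b_i)^2\big)$ and hence negligible relative to the linear term when $q(b_i)$ is small, yielding $\log_2\big(1+q(b_i)\big) \simeq \frac{q(b_i)}{\ln 2}$.

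The main obstacle I anticipate is not the expansion itself but justifying its accuracy across the whole admissible interval: near the upper endpoint $q(b_i)\to 1$ the neglected quadratic term is no longer small (the exact value $\log_2 2 = 1$ differs appreciably from $1/\ln 2 \approx 1.443$), so the approximation is tight only in the small-$q(b_i)$ regime. I would therefore present this as a first-order (small-signal) approximation and bound the error by the convergent tail $\sum_{k\ge 2} q(b_i)^k/(k\ln 2)$, emphasizing that it vanishes quadratically as $q(b_i)\to 0$ rather than claiming an equality that holds uniformly on $[0,1)$.
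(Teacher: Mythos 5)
Your proof is correct and follows essentially the same route as the paper: both convert to the natural logarithm and truncate the series $\ln(1+x)=x-\tfrac{x^2}{2}+\cdots$ to its leading term (the paper merely derives that series by integrating the geometric series rather than quoting it). Your closing caveat about the approximation degrading as $q(b_i)\to 1$ is a fair and slightly more careful observation than the paper makes, but it does not change the substance of the argument.
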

\begin{proof}
We can write:\\
$\log_2  \Big( \frac{p\sigma_i^2}{\sigma_n^2 + g(b_i)l_i} + 1 \Big) = \frac{1}{\ln 2}{\ln \Big( \frac{p\sigma_i^2}{\sigma_n^2 + g(b_i)l_i} + 1 \Big)}$.\\
\indent
We can approximate $g(b_i)$ as $c2^{-db_i}$, where $d=2.0765, c=2.40667$. For the sake of simplicity, we will replace the variable $\bold{b} \in \mathbb{I}^{N_s \times 1}$ with $\bold{x} \in \mathbb{R}^{N_s \times 1}$.\\
\indent
We will now define $f\big(p(x_i)\big) = \ln \Big( \frac{p\sigma_i^2}{\sigma_n^2 + c2^{dx_i}l_i} + 1 \Big)$, where $p(x_i) = \frac{p\sigma_i^2}{\sigma_n^2 + c2^{dx_i}l_i}$. For a geometric series below, with a common ratio of $-p(x_i)$, where $0 \leq p(x_i) < 1$, we can write
\begin{equation}\label{apx_c1}
1-p(x_i)+p(x_i)^2-p(x_i)^3+.. = \frac{1}{1+p(x_i)}.
\end{equation}
\begin{equation}\label{apx_c2}
\ln(1 + p(x_i)) = \int \frac{1}{1 + p(x_i)} d(p(x_i)),
\end{equation}
substituting for $\frac{1}{1+p(x_i)}$ into the integral in $\ref{apx_c2}$ from $\ref{apx_c1}$, we have
\begin{equation}\label{apx_c3}
\ln(1 + p(x_i)) = p(x_i)-\frac{p(x_i)^2}{2}+\frac{p(x_i)^3}{3}-\frac{p(x_i)^4}{4}+...
\end{equation}
Given that $0 \leq p(x_i) < 1$, the higher powers of $p(x_i)$ are negligible and thus the above series can be approximated  as
\begin{equation}\label{apx_c4}
f\big(p(x_i)\big)  \simeq p(x_i).
\end{equation}
By re-substituting variable $\bold{x} \in \mathbb{R}^{N_s \times 1}$ with $\bold{b} \in \mathbb{I}^{N_s \times 1}$, we can effectively write
\begin{equation}\label{apx_c5}
\log_2  \Big( \frac{p\sigma_i^2}{\sigma_n^2 + g(b_i)l_i} + 1 \Big) \simeq \frac{1}{\ln2}\Big(\frac{p\sigma_i^2}{\sigma_n^2 + g(b_i)l_i}\Big). 
\end{equation}
\end{proof}
\vspace{-0.3in}
\begin{lem}\label{lemm2}
It can be shown that $\log_2  \Big( q(b_i) + 1 \Big) = \Big(1-\frac{1}{q(b_i)}\Big)P + L(p,\sigma_i^2, \sigma_n^2)$ for $\infty > q(b_i) \geq 1$, where the terms $P$ and $L(p,\sigma_i^2, \sigma_n^2)$ are not functions of $b_i$.
\end{lem}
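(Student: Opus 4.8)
The plan is to establish the affine-in-$\big(1-1/q(b_i)\big)$ representation by a first-order (tangent-line) linearization of $\log_2\!\big(q(b_i)+1\big)$ valid in the high-SNR regime $q(b_i)\ge 1$, in direct analogy with the series truncation used in Lemma~\ref{lemm1}. The key observation is that $q(b_i)=\frac{p\sigma_i^2}{\sigma_n^2+g(b_i)l_i}$ carries all of its $b_i$-dependence through the single quantity $s_i \triangleq g(b_i)l_i$, so it suffices to expose $\log_2\!\big(q(b_i)+1\big)$ as an affine function of $s_i$ and then re-parametrize through $1-1/q(b_i)$.

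First I would convert to natural logarithms, $\log_2(q+1)=\tfrac{1}{\ln 2}\ln(q+1)$, and split $\ln(q+1)=\ln q+\ln(1+1/q)$. Since $q(b_i)\ge 1$ forces $0<1/q(b_i)\le 1$, the Mercator series derived in the proof of Lemma~\ref{lemm1} applies term-by-term to the second summand, giving $\ln(1+1/q)\simeq 1/q$ after discarding the higher powers. For the first summand I would substitute $q=\frac{p\sigma_i^2}{\sigma_n^2+s_i}$ and linearize $\ln(\sigma_n^2+s_i)$ about the fine-quantization point $s_i=0$ (i.e.\ $b_i\to\infty$), which is legitimate because $g(b_i)\simeq c\,2^{-d b_i}$ decays geometrically in $b_i$.

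Collecting the two contributions yields an expression that is affine in $s_i$, say $\log_2\!\big(q(b_i)+1\big)\simeq \alpha\, s_i+\beta$ with $\alpha,\beta$ built only from $p,\sigma_i^2,\sigma_n^2$. I would then eliminate $s_i$ using the exact identity $1-\tfrac{1}{q(b_i)}=\frac{p\sigma_i^2-\sigma_n^2-s_i}{p\sigma_i^2}$, equivalently $s_i=(p\sigma_i^2-\sigma_n^2)-p\sigma_i^2\big(1-\tfrac{1}{q(b_i)}\big)$, and substitute back. The coefficient multiplying $\big(1-1/q(b_i)\big)$ is then a constant $P$, and every remaining term is free of $b_i$ and collects into $L(p,\sigma_i^2,\sigma_n^2)$, which is precisely the claimed form.

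The main obstacle is not the algebra but controlling the linearization so that the surviving $b_i$-dependence is captured purely by $\big(1-1/q(b_i)\big)$: I must argue that the quadratic and higher terms, coming both from $\ln(1+1/q)$ and from the expansion of $\ln(\sigma_n^2+s_i)$, are negligible over the admissible range $1\le b_i\le N_b$ with $q(b_i)\ge 1$. Because $\log_2(q+1)$ grows without bound while the right-hand side saturates as $q\to\infty$, the representation is necessarily an approximation on the finite operating range rather than an identity on $[1,\infty)$; I would therefore make the truncation error explicit near the boundary $q=1$ and near $q_\infty=p\sigma_i^2/\sigma_n^2$ to confirm it is accurate enough for the subsequent argmax in \eqref{bitcond_cont1}, where the additive constant $L$ and the common factor $P$ can be dropped without affecting the maximizer.
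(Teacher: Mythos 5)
Your route is genuinely different from the paper's, and the difference is not merely cosmetic: it changes the value of $P$. You split $\ln(q+1)=\ln q+\ln(1+1/q)$, linearize $\ln q$ in $s_i=g(b_i)l_i$ about the fine-quantization point $s_i=0$, and then eliminate $s_i$ algebraically via $s_i=p\sigma_i^2-\sigma_n^2-p\sigma_i^2\bigl(1-\tfrac{1}{q(b_i)}\bigr)$. The paper instead discards $\ln(1+1/q)$ outright, writes $\ln q=-\ln(1/q)$, and Taylor-expands $\ln(\cdot)$ about the point $1/q=1$, so that $\bigl(1-\tfrac{1}{q(b_i)}\bigr)$ appears directly as the first-order term with coefficient exactly $1$, giving the universal constant $P=\tfrac{1}{\ln 2}$; all higher-order terms are then swept into $L$. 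If you carry your substitution through, the slope multiplying $\bigl(1-\tfrac{1}{q(b_i)}\bigr)$ comes out as $\tfrac{1}{\ln 2}\bigl(\tfrac{p\sigma_i^2}{\sigma_n^2}-1\bigr)$ (the $-1$ from the retained $1/q$ term, the rest from the chain rule through $s_i$), which depends on the stream index $i$ through $\sigma_i^2$. That universality of $P$ is exactly what the paper uses next: in \eqref{bitcond_cont1} the factor $P$ is dropped as a common multiplier across all streams, leaving the unweighted sum $\sum_i\bigl(1-\tfrac{1}{q(b_i)}\bigr)$. With an $i$-dependent slope $P_i$ the objective would be $\sum_i P_i\bigl(1-\tfrac{1}{q(b_i)}\bigr)$, whose maximizer differs in general, so your argument does not deliver the lemma in the form the downstream derivation requires.

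A secondary concern is your choice of expansion point. Linearizing $\ln(\sigma_n^2+s_i)$ about $s_i=0$ is uncontrolled precisely where the bit allocation matters: for small $b_i$, $g(b_i)l_i$ need not be small relative to $\sigma_n^2$ (e.g., $g(1)\approx 0.57$ while $l_i$ carries the squared singular values of the channel), and within the stated regime $q(b_i)\ge 1$ one only has $s_i\le p\sigma_i^2-\sigma_n^2$, which is large at high SNR. The paper's expansion of $\ln(1/q)$ about $1/q=1$ yields an alternating series that converges for all $q\ge 1$; its weak step is the assertion that the sum of the higher-order terms is (approximately) independent of $b_i$, which it argues only loosely. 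To your credit, you correctly observe that the stated equality is really an approximation (the left side is unbounded in $q$ while the right side saturates), and retaining $\ln(1+1/q)$ is more careful bookkeeping than the paper's; but to recover the paper's conclusion you would need to expand in the variable $1/q(b_i)$ about $1$ rather than in $s_i$ about $0$.
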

\begin{proof}
Consider the expansion for $f\big(p(x_i)\big)$ for $\infty > p(x_i) \geq 1$. We can approximate $f\big(p(x_i)\big)$ as
\begin{equation}\label{apx0}
f(\big(p(x_i)\big) = \ln \Big( p(x_i) + 1 \Big) \simeq \ln \Big(p(x_i) \Big).
\end{equation}
Rewriting $f(\big(p(x_i)\big)$ as:
\begin{equation}\label{apx1}
\begin{split}
&f(\big(p(x_i)\big) = -\ln \bigg( \frac{1}{p(x_i)} \bigg) \text{ for } 0 < \frac{1}{p(x_i)} \leq 2;\\
&f(\big(p(x_i)\big) = -\ln \Big(g(x_i)\Big) \text{ where } g(x_i) = \frac{1}{p(x_i)};\\
\text{ or }& f(\big(p(x_i)\big) = -h\big(g(x_i)\big) \text{ where } h\big(g(x_i)\big) = \ln\big(g(x_i)\big);
\end{split}
\end{equation}
Using the Taylor series at $g(x_i = x_0) = 1 = \frac{1}{p(x_i = x_0)}$ with the region of convergence $R: \infty > p(x_i) \geq \frac{1}{2}$, we have
\begin{equation}\label{apx2}
\begin{split}
&h\big(g(x_i)\big) = h\big(g(x_0)\big) + h^\prime\big(g(x_0)\big)(g(x_i)-1)\\ 
&+ \frac{1}{2}h^{\prime\prime}\big(g(x_0)\big)(g(x_i)-1)^2 + \frac{1}{6}h^{\prime\prime\prime}\big(g(x_0)\big)(g(x_i)-1)^3 + ..
\end{split}
\end{equation}
Also:
\begin{equation}\label{apx3}
\begin{split}
h\big(g(x_0)\big) &= \ln(1) = 0; \text{  }\\
h^\prime\big(g(x_i)\big) &= \frac{1}{g(x_i)} \implies h^\prime\big(g(x_0)\big) = 1;\\
h^{\prime\prime}\big(g(x_i)\big) &= -\frac{1}{[g(x_i)]^2},  h^{\prime\prime}\big(g(x_0)\big) = -1;\\
h^{\prime\prime\prime}\big(g(x_i)\big) &= \frac{2}{[g(x_i)]^3}, h^{\prime\prime\prime}\big(g(x_0)\big) = 2;\cdots
\end{split}
\end{equation}
substituting $\ref{apx3}$ in $\ref{apx2}$, we have
\begin{equation}\nonumber
\begin{split}
h\big(g(x_i)\big) &= \Big(\frac{1}{p(x_i)}-1\Big) - \frac{1}{2}\Big(\frac{1}{p(x_i)}-1\Big)^2
\end{split}
\end{equation}
\begin{equation}\label{apx4}
\begin{split}
&+\frac{1}{3}\Big(\frac{1}{p(x_i)}-1\Big)^3 - ..\\
f\big(p(x_i)\big) &= \Big(1-\frac{1}{p(x_i)}\Big) - \sum_{n=2}^{\infty} \frac{(-1)^{(n-1)}}{n}\Big(\frac{1}{p(x_i)}-1\Big)^n
\end{split}
\end{equation}
Using binomial expansion for $\Big(\frac{1}{p(x_i)}-1\Big)^n$, we can write
\begin{equation}\label{apx6a}
\Big(\frac{1}{p(x_i)}-1\Big)^n = \sum_{k=0}^n {{n}\choose{k}}\frac{-1^{(n-k)}}{(p(x_i))^k} = K_n(p,\sigma_i^2, \sigma_n^2).
\end{equation}
It is to be noted that for $n \ge 2$ and larger values of $k$, the term $K_n(p,\sigma_i^2, \sigma_n^2)$ becomes less dependent on $x_i$ and is convergent for $p(x_i) \ge 1$. So, we can write $\ref{apx6a}$ as
\begin{equation}\label{apx5}
f\big(p(x_i)\big) = \Big(1-\frac{1}{p(x_i)}\Big) + G(p,\sigma_i^2, \sigma_n^2),
\end{equation}
Where $G(p,\sigma_i^2, \sigma_n^2) = - \sum_{n=2}^{\infty} \frac{(-1)^{(n-1)}K_n(p,\sigma_i^2, \sigma_n^2)}{n}$ and is a converging series. By re-substituting variable $\bold{x} \in \mathbb{R}^{N_s \times 1}$ with $\bold{b} \in \mathbb{I}^{N_s \times 1}$, we can effectively write
\begin{equation}\label{apx6}
\log_2  \bigg( \frac{p\sigma_i^2}{\sigma_n^2 + g(b_i)l_i} + 1 \bigg) = P\bigg(1-\frac{1}{\frac{p\sigma_i^2}{\sigma_n^2 + g(b_i)l_i}}\bigg) + L(p,\sigma_i^2, \sigma_n^2).
\end{equation}
where $P = \frac{1}{\ln2}$ and $L(p,\sigma_i^2, \sigma_n^2) = \frac{G(p,\sigma_i^2, \sigma_n^2)}{\ln2}$.
\end{proof}



%
\bibliographystyle{IEEEtran}
\bibliography{Opt_BA_MaMIMO_ITGCN}

\begin{thebibliography}{10}
\providecommand{\url}[1]{#1}
\csname url@samestyle\endcsname
\providecommand{\newblock}{\relax}
\providecommand{\bibinfo}[2]{#2}
\providecommand{\BIBentrySTDinterwordspacing}{\spaceskip=0pt\relax}
\providecommand{\BIBentryALTinterwordstretchfactor}{4}
\providecommand{\BIBentryALTinterwordspacing}{\spaceskip=\fontdimen2\font plus
\BIBentryALTinterwordstretchfactor\fontdimen3\font minus
  \fontdimen4\font\relax}
\providecommand{\BIBforeignlanguage}[2]{{%
\expandafter\ifx\csname l@#1\endcsname\relax
\typeout{** WARNING: IEEEtran.bst: No hyphenation pattern has been}%
\typeout{** loaded for the language `#1'. Using the pattern for}%
\typeout{** the default language instead.}%
\else
\language=\csname l@#1\endcsname
\fi
#2}}
\providecommand{\BIBdecl}{\relax}
\BIBdecl

\bibitem{Ganti2}
\BIBentryALTinterwordspacing
R.~K. {Ganti}, ``{Energy Efficiency in Cellular Networks}.'' [Online].
  Available: \url{https://www.naefrontiers.org/44661/Green-Comm-Ganti}
\BIBentrySTDinterwordspacing

\bibitem{Ganti}
S.~{Sarkar}, R.~K. {Ganti}, and M.~{Haenggi}, ``Optimal base station density
  for power efficiency in cellular networks,'' \emph{2014 IEEE International
  Conference on Communications (ICC)}, pp. 4054--4059, June 2014.

\bibitem{SpecArt}
K.~{Pretz}, ``{What's in Store for 5G This Year},'' \emph{IEEE Spectrum}, Feb.
  2020.

\bibitem{BTSPow}
C.~Peng, S.-B. Lee, S.~Lu, H.~Luo, and H.~Li, ``{Traffic-Driven Power Saving in
  Operational 3G Cellular Networks},'' \emph{ACM 17th Annual International
  Conference on Mobile Computing and Networking (MobiCom)}, pp. 121--132, 2011.

\bibitem{5GReq}
S.~{Mattisson}, ``{An Overview of 5G Requirements and Future Wireless Networks:
  Accommodating Scaling Technology},'' \emph{IEEE Solid-State Circuits
  Magazine}, vol.~10, no.~3, pp. 54--60, Summer 2018.

\bibitem{5GBackHaul}
{Z. Gao, L. Dai, D. Mi, Z. Wang, M. A. Imran, and M. Z. Shakir}, ``{MmWave
  Massive-MIMO-based Wireless Backhaul for the 5G Ultra-Dense Network},''
  \emph{IEEE Wireless Communications}, vol.~22, no.~5, pp. 13--21, Oct. 2015.

\bibitem{5GBackHaul2}
{X. Ge, H. Cheng, M. Guizani, and T. Han}, ``{5G Wireless Backhaul Networks:
  Challenges and Research Advances},'' \emph{IEEE Network}, vol.~28, no.~6, pp.
  6--11, Nov. 2014.

\bibitem{Heter5G}
S.~{Sun}, K.~{Adachi}, P.~H. {Tan}, Y.~{Zhou}, J.~{Joung}, and C.~K. {Ho},
  ``{Heterogeneous network: An evolutionary path to 5G},'' \emph{2015 21st
  Asia-Pacific Conference on Communications (APCC)}, pp. 174--178, Oct. 2015.

\bibitem{SigProc}
{R. W. Heath, N. González-Prelcic, S. Rangan, W. Roh, and A. M. Sayeed}, ``{An
  Overview of Signal Processing Techniques for Millimeter Wave MIMO Systems},''
  \emph{IEEE Journal of Selected Topics in Signal Processing}, vol.~10, no.~3,
  pp. 436--453, April 2016.

\bibitem{mmPreCom}
{A. Alkhateeb, J. Mo, N. Gonzalez-Prelcic, and R. W. Heath}, ``{MIMO Precoding
  and Combining Solutions for Millimeter-Wave Systems},'' \emph{IEEE
  Communications Magazine}, vol.~52, no.~12, pp. 122--131, Dec. 2014.

\bibitem{Rangan}
{O. Orhan, E. Erkip, and S. Rangan}, ``{Low Power Analog-to-Digital Conversion
  in Millimeter Wave Systems: Impact of Resolution and Bandwidth on
  Performance},'' \emph{2015 Information Theory and Applications Workshop
  (ITA)}, pp. 191--198, Feb. 2015.

\bibitem{Jmo}
{J. Mo and R. W. Heath Jr.}, ``{Capacity Analysis of One-Bit Quantized MIMO
  Systems With Transmitter Channel State Information},'' \emph{IEEE Tran. on
  Signal Processing}, vol.~63, no.~20, pp. 1286--1289.

\bibitem{VarBitAllocJour}
{J. Choi, B. L. Evans and, A. Gatherer}, ``{Resolution-Adaptive Hybrid MIMO
  Architectures for Millimeter Wave Communications},'' \emph{IEEE Transactions
  on Signal Processing}, vol.~65, no.~23, pp. 6201--6216, Dec. 2017.

\bibitem{Zakir1}
{I. Z. Ahmed, H. Sadjadpour, and S. Yousefi}, ``{A Joint Combiner and Bit
  Allocation Design for Massive MIMO using Genetic Algorithm},'' \emph{2017
  51st Asilomar Conference on Signals, Systems, and Computers}, pp. 1045--1049,
  Oct. 2017.

\bibitem{Zakir2}
{I. Z. Ahmed, H. Sadjadpour and, S. Yousefi}, ``{Single-User mmWave Massive
  MIMO: SVD-based ADC Bit Allocation and Combiner Design},'' \emph{2018
  International Conference on Signal Processing and Communications (SPCOM)},
  pp. 357--361, July 2018.

\bibitem{Zakir3}
------, ``{Capacity Analysis and Bit Allocation Design for Variable-Resolution
  ADCs in Massive MIMO},'' \emph{2018 IEEE Military Communications Conference
  (MILCOM)}, pp. 1--6, Oct. 2018.

\bibitem{Mezghani}
{A. Mezghani and J. Nossek}, ``{On Ultra-Wideband MIMO Systems with 1-bit
  Quantized Outputs: Performance Analysis and Input Optimization},'' \emph{2007
  IEEE Int. Symp. Inf. Theory}, pp. 1286--1289, 2007.

\bibitem{Muris}
{M. Sarajlić, L. Liu and, O. Edfors}, ``{When Are Low Resolution ADCs Energy
  Efficient in Massive MIMO?}'' \emph{IEEE Access}, vol.~5, pp.
  14\,837--14\,853, 2017.

\bibitem{HybArchCap}
{J. Mo, A. Alkhateeb, S. Abu-Surra, and R. W. Heath}, ``{Hybrid Architectures
  With Few-Bit ADC Receivers: Achievable Rates and Energy-Rate Tradeoffs},''
  \emph{IEEE Transactions on Wireless Communications}, vol.~16, no.~4, pp.
  2274--2287, April 2017.

\bibitem{Uplink}
{L. Fan, S. Jin, C. Wen and, H. Zhang}, ``{Uplink Achievable Rate for Massive
  MIMO Systems With Low-Resolution ADC},'' \emph{IEEE Communications Letters},
  vol.~19, no.~12, pp. 2186--2189, Dec. 2015.

\bibitem{SvenCap}
{S. Jacobsson, G. Durisi, M. Coldrey, U. Gustavsson, and C. Studer},
  ``{Throughput Analysis of Massive MIMO Uplink With Low-Resolution ADCs},''
  \emph{IEEE Transactions on Wireless Communications}, vol.~16, no.~6, pp.
  4038--4051, June 2017 \color{black}.

\bibitem{ChanEst}
\color{black} R.~{Wang}, H.~{He}, S.~{Jin}, X.~{Wang}, and X.~{Hou}, ``{Channel
  Estimation for Millimeter Wave Massive MIMO Systems with Low-Resolution
  ADCs},'' \emph{2019 IEEE 20th International Workshop on Signal Processing
  Advances in Wireless Communications (SPAWC)}, pp. 1--5, 2019.

\bibitem{DongR2}
P.~{Dong}, H.~{Zhang}, W.~{Xu}, and X.~{You}, ``{Efficient Low-Resolution ADC
  Relaying for Multiuser Massive MIMO System},'' \emph{IEEE Transactions on
  Vehicular Technology}, vol.~66, no.~12, pp. 11\,039--11\,056, 2017.

\bibitem{DongR3}
P.~{Dong}, H.~{Zhang}, Q.~{Wu}, and G.~Y. {Li}, ``{Spatially Correlated Massive
  MIMO Relay Systems With Low-Resolution ADCs},'' \emph{IEEE Transactions on
  Vehicular Technology}, vol.~69, no.~6, pp. 6541--6553, 2020.

\bibitem{DongR1}
\color{black} P.~{Dong}, H.~{Zhang}, W.~{Xu}, G.~Y. {Li}, and X.~{You},
  ``{{\color{black} Performance Analysis of Multiuser Massive MIMO With
  Spatially Correlated Channels Using Low-Precision ADC}},''
  \emph{{\color{black} IEEE Communications Letters}}, vol.~{\color{black} 22},
  no.~{\color{black} 1}, pp.~{\color{black} 205--208}, {\color{black} 2018}.

\bibitem{VRadc1}
N.~{Liang} and W.~{Zhang}, ``{A Mixed-ADC Receiver Architecture for Massive
  MIMO Systems},'' \emph{2015 IEEE Information Theory Workshop - Fall (ITW)},
  pp. 229--233, 2015.

\bibitem{VRadc3}
S.~{Varshney}, M.~{Goswami}, and B.~R. {Singh}, ``{4-6 Bit Variable Resolution
  ADC},'' \emph{2013 International Symposium on Electronic System Design}, pp.
  72--76, 2013.

\bibitem{CoherTime}
V.~{Va} and R.~W. {Heath}, ``{Basic Relationship between Channel Coherence Time
  and Beamwidth in Vehicular Channels},'' \emph{2015 IEEE 82nd Vehicular
  Technology Conference (VTC2015-Fall)}, pp. 1--5, 2015 \color{Black}.

\bibitem{rapaport}
{T. S. Rappaport, R. W. Heath Jr, R. C. Daniels, and J. N. Murdock},
  \emph{{Millimeter Wave Wireless Communications}}.\hskip 1em plus 0.5em minus
  0.4em\relax Prentice Hall Press, 2015.

\bibitem{kaushik2019energy}
A.~{Kaushik}, C.~{Tsinos}, E.~{Vlachos}, and J.~{Thompson}, ``{Energy Efficient
  ADC Bit Allocation and Hybrid Combining for Millimeter Wave MIMO Systems},''
  \emph{2019 IEEE Global Communications Conference (GLOBECOM)}, pp. 1--6, 2019
  \color{Black}.

\bibitem{Rangan2}
{A. K. Fletcher, S. Rangan, V. K. Goyal, and K. Ramchandran}, ``{Robust
  Predictive Quantization: Analysis and Design Via Convex Optimization},''
  \emph{IEEE Journal of Selected Topics in Signal Processing}, vol.~1, no.~4,
  pp. 618--632, Dec. 2007.

\bibitem{Suhas}
{S. N. Diggavi and T. M. Cover}, ``{The Worst Additive Noise under a Covariance
  Constraint},'' \emph{IEEE Transactions on Information Theory}, vol.~47,
  no.~7, pp. 3072--3081, Nov. 2001.

\bibitem{Scramble}
{J. E. M. Nilsson and T. C. Giles}, ``{Wideband Multi-Carrier Transmission for
  Military HF Communication},'' \emph{{MILCOM 97 Proceedings}}, vol.~2, pp.
  1046--1051 vol.2, Nov. 1997.

\bibitem{Bengt}
{B. Holter}, ``{On the Capacity of the MIMO Channel - A Tutorial
  Introduction},'' \emph{Proc. Norwegian Signal Processing Conference}, 2001.

\bibitem{EEdef}
{S. Han, I. Chih-lin, Z. Xu, and C. Rowell}, ``{Large-Scale Antenna Systems
  with Hybrid Analog and Digital Beamforming for Millimeter Wave 5G},''
  \emph{IEEE Communications Magazine}, vol.~53, no.~1, pp. 186--194, Jan. 2015.

\bibitem{PreDsgn}
{O. E. Ayach, S. Rajagopal, S. Abu-Surra, Z. Pi, and R. W. Heath}, ``{Spatially
  Sparse Precoding in Millimeter Wave MIMO Systems},'' \emph{IEEE Transactions
  on Wireless Communications}, vol.~13, no.~3, pp. 1499--1513, Mar. 2014.

\bibitem{NumC}
W.~H. Press, S.~A. Teukolsky, W.~T. Vetterling, and B.~P. Flannery,
  \emph{Numerical Recipes in C, The Art of Scientific Computing}.\hskip 1em
  plus 0.5em minus 0.4em\relax USA: Cambridge University Press, 1992.

\bibitem{SimAn2}
{D. Henderson, S. Jacobson, and A.W. Johnson}, ``{The Theory and Practice of
  Simulated Annealing},'' \emph{Handbook of Metaheuristics}, pp. 287--319, 04
  2006.

\bibitem{nyusim}
{S. Sun, G. R. MacCartney, and T. S. Rappaport}, ``{A Novel Millimeter-Wave
  Channel Simulator and Applications for 5G Wireless Communications},''
  \emph{2017 IEEE International Conference on Communications (ICC)}, pp. 1--7,
  May 2017.

\bibitem{Murmann}
\BIBentryALTinterwordspacing
B.~{Murmann}, ``{ADC Performance Survey 1997-2019}.'' [Online]. Available:
  \url{{http://web.stanford.edu/~murmann/adcsurvey.html}}
\BIBentrySTDinterwordspacing

\bibitem{RobertGallager}
{R. G. Gallager}, ``{Stochastic Processes: Theory for Applications},''
  \emph{Cambridge University Press}, 2013.

\bibitem{Vishwa}
D.~{Tse} and P.~{Viswanath}, ``{Fundamentals of Wireless Communication},''
  \emph{Cambridge University Press}, 2005.

\end{thebibliography}
\end{document}